\documentclass[10pt,reqno,oneside]{amsart}
\usepackage{verbatim,amsmath,amssymb,cite,epsfig,subfigure,color,ulem}

\hoffset        0.0in
\voffset        0.0pt
\evensidemargin 0.0in
\oddsidemargin  0.0in
\topmargin      0.0in
\headheight     12pt
\headsep        24pt
\textheight     8.5in
\textwidth      6.5in
\marginparsep   0.0in
\marginparwidth 0.0in
\footskip       0.5in
\numberwithin{equation}{section}

\newcommand{\Real}{\mathbb R}
\newcommand{\N}{\mathbb N}
\newcommand{\Z}{\mathbb Z}

\newcommand{\bfr}{{\bf r}}

\newcommand{\bfu}{{\bf u}}

\newcommand{\bfx}{{\bf x}}

\newcommand{\bfz}{{\bf z}}

\newcommand{\bsomega}{\boldsymbol{\omega}}
\newcommand{\bstheta}{\boldsymbol{\theta}}

\newtheorem{lemma}{Lemma}[section]

\newtheorem{remark}{Remark}[section]

\begin{document}

\title[]
{On the axisymmetric steady incompressible Beltrami flows}

\author{Pavel B\v{e}l\'{\i}k}
\author{Xueqing Su}
\author{Douglas P.~Dokken}
\author{Kurt Scholz}
\author{Mikhail M.~Shvartsman}

\address{P.~B\v{e}l\'{\i}k\\
Mathematics, Statistics, and Computer Science Department\\
Augsburg University\\
2211 Riverside Avenue\\
Minneapolis, MN 55454\\
U.S.A.}
\email{belik@augsburg.edu}

\address{X.~Su\\
Mathematics, Statistics, and Computer Science Department\\
Augsburg University\\
2211 Riverside Avenue\\
Minneapolis, MN 55454\\
U.S.A.}
\email{sux@augsburg.edu}

\address{D.~P.~Dokken\\
Department of Mathematics\\
University of St.~Thomas\\
2115 Summit Avenue\\
St.~Paul, MN 55105\\
U.S.A.} \email{dpdokken@stthomas.edu}

\address{K.~Scholz\\
Department of Mathematics\\
University of St.~Thomas\\
2115 Summit Avenue\\
St.~Paul, MN 55105\\
U.S.A.} \email{k9scholz@stthomas.edu}

\address{M.~M.~Shvartsman\\
Department of Mathematics\\
University of St.~Thomas\\
2115 Summit Avenue\\
St.~Paul, MN 55105\\
U.S.A.} \email{mmshvartsman@stthomas.edu}


\keywords{Axisymmetric Beltrami flow, Trkalian flow, Bragg--Hawthorne equation, cylindrical coordinates, spherical coordinates, paraboloidal coordinates, prolate spheroidal coordinates, oblate spheroidal coordinates, vorticity, vortex breakdown}

\subjclass[2010]{33C05,33C15,34B08,35Q31,35Q86,76U05,76B47}

\date{\today}

\begin{abstract}
  In this paper, Beltrami vector fields in several orthogonal coordinate systems are obtained analytically and numerically. Specifically, axisymmetric incompressible inviscid steady state Beltrami (Trkalian) fluid flows are obtained with the motivation to model flows that have been hypothesized to occur in tornadic flows. The studied coordinate systems include those that appear amenable to modeling such flows: the cylindrical, spherical, paraboloidal, and prolate and oblate spheroidal systems. The usual Euler equations are reformulated using the Bragg--Hawthorne equation for the stream function of the flow, which is solved analytically or numerically in each coordinate system under the assumption of separability of variables. Many of the obtained flows are visualized via contour plots of their stream functions in the $rz$-plane. Finally, the results are combined to provide a qualitative quasi-static model for a progression of flows that one might see in the process of a vortex breakdown. The results in this paper are equally applicable in electromagnetics, where the equivalent concept is that of a force-free magnetic field.
\end{abstract}

\allowdisplaybreaks
\thispagestyle{empty}
\maketitle

\section{Introduction}
\label{sec:intro}
In fluid mechanics, Beltrami or helical flows are fluid flows in which the velocity and the vorticity (curl of velocity) of the fluid are parallel to each other at all points and all times. Flows of this nature have been studied since at least the late 1800s and have applications in both fluid dynamics and electromagnetics, where a force-free magnetic field is one for which the Lorentz (magnetic) force density vanishes, or equivalently, the magnetic field is everywhere parallel to the direction of the current flow \cite{marsh96}. In this paper we will focus mainly on the hydrodynamics case, but many parallels can be drawn between the two, and results from one field can be applied in the other one.

Beltrami fluid flows are of interest for several reasons. They can have complex dynamics \cite{constantinmajda88}, and types of Beltrami flows that possess ergodic theoretic properties, such as strong mixing, make for attractive models for turbulent flows \cite{majdabertozzi01}. In nature, classes of rotating thunderstorms (supercells) may exhibit characteristics of Beltrami flows; this is supported by numerical simulations such as in \cite{nodaniino10}. Flows with low instability (CAPE) and nearly circular hodographs approach Beltrami flows as the hodograph becomes more circular~\cite{weismanrotunno00}. Highly helical flows are thought to be present in well-developed tornadic flows \cite{sasaki14} or resemble the gross supercell structure \cite{daviesjones08,lilly83,lilly86b}. Beltrami flows that are solutions of the Euler equations and the decaying nonsteady Beltrami flow solutions of the Navier--Stokes equations (e.g., \cite{daviesjonesrichardson02,lilly83,lilly86b}) could be or have been used to test code in numerical weather models, such as in the Advanced Regional Prediction System (ARPS) \cite{shapiro93}, the Weather Research and Forecasting Model (WRF) \cite{wrf}, or the Terminal Area Simulation System (TASS) \cite{ahmadproctor11}.

In this paper we investigate axisymmetric incompressible inviscid steady state Beltrami (Trkalian) flows with the goal of potentially developing other test cases for numerical models as well as possible models for tornadic or supercell flows. We explore several geometries characterized by various orthogonal coordinate systems and construct separable solutions to the relevant equations in these systems. We further explore symmetries in the mathematical problem and outline a way to construct infinitely many other Beltrami flows in each coordinate system. We present graphical results of such flows in each coordinate system.

The paper is organized as follows. In Section~\ref{sec:Beltrami} we review the governing equations for inviscid fluid flows and discuss Beltrami flows. In Section~\ref{sec:Bragg} we reformulate the axisymmetric problem using the Bragg--Hawthorne equation and discuss additional symmetries that can be used to construct new solutions from existing ones. In Section~\ref{sec:coordinates}, several orthogonal coordinate systems of interest are introduced, and under the assumption of separability of variables the mathematical problem is reformulated as a coupled system of two second-order ordinary differential equations, which are equipped with either zero, one, or two boundary conditions. In several cases, one of the equations results in a boundary value eigenvalue problem. These equations are then analyzed and in some cases analytic solutions are presented. Many of these solutions are given in terms of special functions (classical hypergeometric and Kummer functions) and we refer the reader to \cite{NIST} for details. In the remaining cases where analytic solutions do not appear available (prolate and oblate spheroidal coordinates in sections~\ref{sec:elliptic1} and \ref{sec:elliptic2}), a numerical approach is used to approximate the solutions. In Section~\ref{sec:vortex_breakdown} we combine several solutions from three of the coordinate systems and discuss their similarity to flows with a vortex breakdown~\cite{keller95}.

\section{Beltrami Flows}
\label{sec:Beltrami}
Inviscid fluid flows are usually modeled by the Euler equation (plus relevant energy and state equations)
\begin{equation}
  \frac{\partial \bfu}{\partial t}
  +
  (\bfu\cdot\nabla)\bfu
  =
  -\frac{\nabla p}{\rho}
  -
  g\bfz,
  \label{eq:euler_equation}
\end{equation}
where $\bfu(\bfx,t)$ is the velocity of the fluid at some point $\bfx\in\Real^3$ and time $t\in\Real$, $p(\bfx,t)$ is the pressure, $\rho(\bfx,t)$ is the mass density, $g$ is the acceleration due to gravity, and $\bfz$ is the upward pointing unit coordinate vector. Gradients are taken with respect to the spatial variable $\bfx\in\Real^3$. The mass conservation equation is
\begin{equation}
  \frac{\partial\rho}{\partial t}
  +
  \nabla\cdot(\rho\bfu)
  =
 \frac{\partial\rho}{\partial t}
  +
  \bfu\cdot\nabla\rho
  +
  \rho(\nabla\cdot\bfu)
  =
  0.
  \label{eq:mass_conservation}
\end{equation}
Taking ``curl'' of the equation for velocity \eqref{eq:euler_equation}, we obtain an equation for the vorticity, $\bsomega=\nabla\times\bfu$,
\begin{equation}
  \frac{\partial\bsomega}{\partial t}
  =
  \nabla\times(\bfu\times\bsomega)
  +
  \frac{1}{\rho^2}(\nabla\rho\times\nabla p),
  \label{eq:vorticity_equation}
\end{equation}
also known as the ``vorticity'' equation. A Beltrami flow is one in which the vorticity and velocity satisfy the Beltrami condition \cite{majdabertozzi01}
\begin{equation}
  \bsomega
  =
  \alpha\bfu
  \quad\text{ for some }\alpha\ne0,
  \label{eq:beltrami}
\end{equation}
where $\alpha$ is called the abnormality and is, in general, a function of position and time. This condition implies that $\bfu\times\bsomega={\bf 0}$ everywhere.

It is shown in \cite{majdabertozzi01} that a steady Beltrami pair $\bf u$, $\boldsymbol\omega$ for which $\nabla\cdot\bfu=0$ is a solution to the Euler equation \eqref{eq:euler_equation} with constant density $\rho$; the mass conservation equation \eqref{eq:mass_conservation} is then trivially satisfied as well. The vorticity equation \eqref{eq:vorticity_equation} then implies that such a flow will be what is known in the atmospheric science community as ``barotropic'', since the ``baroclinic'' term $\nabla\rho\times\nabla p$ will vanish. Also, taking divergence of both sides of~\eqref{eq:beltrami} and using $\nabla\cdot\bfu=0$ results in a necessary condition on the function $\alpha$
\begin{equation*}
  \bfu\cdot\nabla\alpha=0,
\end{equation*}
so $\alpha$ is constant on the streamlines of the flow.

A steady Beltrami pair $\bf u$, $\boldsymbol\omega$ for which $\nabla\cdot\bfu=0$ is a solution to the governing equations with nonconstant (steady) mass density as well. From \eqref{eq:euler_equation} we can solve for $\dfrac{\nabla p}{\rho}$, and \eqref{eq:mass_conservation} implies $\bfu\cdot\nabla\rho=0$. Thus any smooth enough mass density field that is constant on the streamlines will satisfy the governing equations as well.

Finally, even if $\bf u$ is not divergence free, \eqref{eq:euler_equation} still provides a solution for $\dfrac{\nabla p}{\rho}$, and if $\rho$ can be found that satisfies the mass equation \eqref{eq:mass_conservation}, then a complete solution to the governing equations is obtained. The mass density would need to satisfy the ODE
\begin{equation*}
  \frac{d\rho}{ds}
  =
  -\rho(\nabla\cdot\bfu)
\end{equation*}
along each streamline, parameterized with the parameter $s$.

The special case of \eqref{eq:beltrami}, in which $\alpha$ is independent of the spatial variables, is known as a {\it Trkalian} flow \cite{lakhtakia94,trkal94}. In this case, $\alpha$ is also independent of time \cite{trkal94}. Our work focuses exclusively on Trkalian flows.

\section{The Bragg--Hawthorne Equation}
\label{sec:Bragg}
We now formulate the basic equations governing incompressible, steady, axisymmetric Eulerian flows with constant mass density and in the absence of body forces. Additional details can be found, for example, in \cite{batchelor,moffatt69}.

The assumptions of incompressibility, axisymmetry, and steady state allow us to use the stream function formulation in cylindrical coordinates $(r,\theta,z)$ with a stream function $\psi(r,z)$. The corresponding velocity of the flow is then
\begin{equation}
\label{eq:velocity}
  \bfu
  =
  -
  \frac{1}{r}\frac{\partial\psi}{\partial z}\,\bfr
  +
  w\,\bstheta
  +
  \frac{1}{r}\frac{\partial\psi}{\partial r}\,\bfz,
\end{equation}
where $\bfr$, $\bstheta$, and $\bfz$ are the cylindrical coordinates basis vectors, and $w(r,z)$ is the tangential component of the velocity. In this stream function formulation the incompressibility condition, $\nabla\cdot\bfu=0$, is automatically satisfied provided $\psi$ is smooth enough. The vorticity, $\bsomega=\nabla\times\bfu$, then satisfies
\begin{equation}
  \label{eq:vorticity}
  \bsomega
  =
  -
  \frac{1}{r}\frac{\partial(rw)}{\partial z}\,\bfr
  -
  \frac{1}{r}\,D^2\psi\,\boldsymbol{\theta}
  +
  \frac{1}{r}\frac{\partial(rw)}{\partial r}\,\bfz,
\end{equation}
where
\begin{equation*}
  D^2\psi
  =
  r\,\frac{\partial}{\partial r}\left(\frac{1}{r}\frac{\partial\psi}{\partial r}\right)
  +
  \frac{\partial^2\psi}{\partial z^2}
  =
  \frac{\partial^2\psi}{\partial r^2}
  -
  \frac{1}{r}
  \frac{\partial\psi}{\partial r}
  +
  \frac{\partial^2\psi}{\partial z^2}.
\end{equation*}
We let
\begin{equation}
  \label{eq:swirl_head}
  C=rw\qquad\text{and}\qquad H=\frac{p}{\rho}+\frac{1}{2}|\bfu|^2
\end{equation}be the {\it swirl} (sometimes referred to as angular momentum) and {\it energy head}, respectively, where $p(r,z)$ is the fluid pressure and $\rho$ its mass density. The conditions for a steady flow are \cite{batchelor}
\begin{gather*}
  C=C(\psi),
  \qquad
  H=H(\psi),
\end{gather*}
i.e., both the swirl and the head are constant on the stream surfaces $\psi=\text{const.}$, and hence along streamlines. The momentum equation~\eqref{eq:euler_equation} can be written as
\begin{equation}
  \label{eq:bragg-hawthorne}
  D^2\psi
  =
  r^2\,\frac{dH}{d\psi}
  -
  C\,\frac{dC}{d\psi},
\end{equation}
which is known as the Bragg--Hawthorne equation.

When the energy head is constant in space so that $dH/d\psi=0$, equation \eqref{eq:bragg-hawthorne} reduces to $D^2\psi=-C\,\dfrac{dC}{d\psi}$ and $-\dfrac{1}{r}\,D^2\psi=w\,\dfrac{dC}{d\psi}$. The vorticity equation \eqref{eq:vorticity} can then be rewritten as
\begin{equation*}
  \bsomega
  =
  \frac{dC}{d\psi}
  \left(
    -
    \frac{1}{r}\frac{\partial\psi}{\partial z}\,\bfr
    +
    w\,\boldsymbol{\theta}
    +
    \frac{1}{r}\frac{\partial\psi}{\partial r}\,\bfz
  \right)
  =
  \frac{dC}{d\psi}\,\bfu,
\end{equation*}
so the corresponding flow is a {\it Beltrami} flow, in which velocity and vorticity are parallel everywhere in space.

In what follows, we will focus on the simple case
\begin{equation}
  \label{eq:C=alpha_psi}
  C(\psi)=\alpha\psi
\end{equation}
studied, for example, in \cite{moffatt69}. In this case, $\bsomega=\alpha\bfu$, so the proportionality constant is independent of $r$ and $z$ and the flow is Trkalian. The Bragg--Hawthorne equation~\eqref{eq:bragg-hawthorne} then further reduces to
\begin{equation}
\label{eq:eqn_psi}
  D^2\psi
  =
  -\alpha^2\psi.
\end{equation}
Due to the assumption of axisymmetry, the stream function $\psi(r,z)$ solving \eqref{eq:eqn_psi} is sought in the half plane $\{(r,z):\ r>0,\ z\in\Real\}$ with a constant boundary condition on the $z$-axis in order for the axis to be a streamline. Additionally, we set the constant to be zero, since otherwise $C(\psi)=\alpha\psi$ would imply nonzero swirl $C$ on the $z$-axis, and the expression $C=rw$ would lead to an unbounded azimuthal component $w$ of the velocity on the $z$-axis. The condition that $\psi=0$ for $r=0$ will be used in the next sections to specify boundary conditions on the functions arising by separating variables.

\subsection{Symmetry Solutions}
\label{sec:symmetry_solutions}
Note that equation~\eqref{eq:eqn_psi} possesses several properties that can be used to construct additional Beltrami solutions:
\begin{enumerate}
  \item By linearity and homogeneity of \eqref{eq:eqn_psi}, a linear combination of two solutions is another solution.
  \item If $\psi(r,z)$ is a solution, then so are $\psi(r,-z)$ and $\psi(r,z+\zeta)$ for any $\zeta\in\Real$. This follows since only a second derivative in $z$ is present in the operator $D^2$.
  \item If $\psi(r,z)$ is a solution and $\psi(r,-z)=\psi(r,z)$, then $\Psi(r,z)=\psi(r,z-\zeta)-\psi(r,z+\zeta)$ is a solution that satisfies $\Psi(r,0)=0$ and $\Psi(r,-z)=-\Psi(r,z)$ for any $\zeta\in\Real$.
  \item If $\psi(r,z)$ is a solution and $\psi(r,-z)=-\psi(r,z)$, then $\Psi(r,z)=\psi(r,z-\zeta)+\psi(r,z+\zeta)$ is a solution that satisfies $\Psi(r,0)=0$ and $\Psi(r,-z)=-\Psi(r,z)$ for any $\zeta\in\Real$.
  \item If $\psi(r,z)$ is a solution, then $\Psi(r,z)=\psi(r,\zeta+z)-\psi(r,\zeta-z)$ is a solution that satisfies $\Psi(r,0)=0$ and $\Psi(r,-z)=-\Psi(r,z)$ for any $\zeta\in\Real$.
\end{enumerate}
The last three properties can be used to construct flows in the upper half space that satisfy the boundary condition that the flow cannot penetrate the ground.

\section{Equations in Various Coordinate Systems}
\label{sec:coordinates}
In this section we explore the various appropriate coordinate systems, rewrite the governing equation \eqref{eq:eqn_psi} in these systems, and, under the assumption of separability of the sought solution, rewrite the problem as a system of two independent ODEs to be solved.

Motivated by the schematics of a vortex breakdown in a tornadic flow shown in Fig.~\ref{fig:vortex_breakdown}, we will focus on coordinate systems that seem most ``friendly'' to modeling such flows. Specifically, after discussing cylindrical and spherical coordinate systems, we will focus on paraboloidal and prolate and oblate spheroidal coordinate systems.
\begin{figure}
  \includegraphics[width=0.9\textwidth]{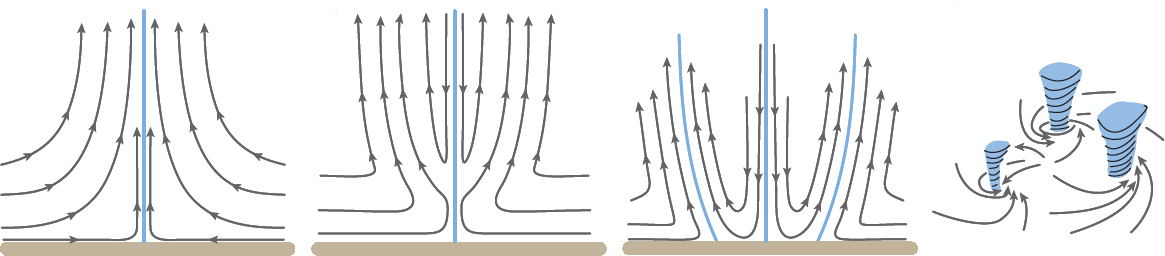}
  \caption{Schematic examples of tornadic flows illustrating a vortex breakdown. A single-cell updraft flow (left panel) bifurcates into two-cell flows (middle two panels) and further bifurcates into several vortices (right panel).}
  \label{fig:vortex_breakdown}
\end{figure}

\subsection{Cylindrical Coordinates}
\label{sec:cylindrical}
Assume that the solution to \eqref{eq:eqn_psi} has the form $\psi(r,z)=f(r)g(z)$ with $r\ge0$ and $-\infty<z<\infty$. With primes denoting the derivatives with respect to the appropriate variables, equation \eqref{eq:eqn_psi} can be written as
\begin{equation*}
  f''g
  -
  \frac{1}{r}\,f'g
  +
  fg''
  =
  -\alpha^2fg,
\end{equation*}
which can be separated into the equations
\begin{gather*}
  f''
  -
  \frac{1}{r}\,f'
  +
  cf
  =
  0,\\
  g''
  +
  (\alpha^2-c)\,g
  =
  0
\end{gather*}
for some $c\in\Real$.

Recall that the boundary condition on the stream function is $\psi(0,z)=f(0)g(z)=0$ which implies that $f(0)=0$. With this boundary condition $\psi$ is a $\mathcal{C}^\infty$ function on $\{(r,z):\ r>0\}$ and velocity (and vorticity) components can be obtained using \eqref{eq:velocity}, \eqref{eq:swirl_head} and \eqref{eq:C=alpha_psi}.

Note that if we define a function $F(\mu)$ with $\mu=r^2$ via $f(r)=F(\mu)$, we can rewrite the system as
\begin{gather*}
  F''
  +
  \frac{c}{4\mu}\,F
  =
  0,\\
  g''
  +
  (\alpha^2-c)g
  =
  0,
\end{gather*}
to be solved for $\mu>0$ and $-\infty<z<\infty$, where the boundary condition becomes $F(0)=0$. As we see below, this system has solutions for all values of the constant $c\in\Real$.

\subsubsection{Case $\alpha^2-c=0$}
In this case we immediately have $g(z)=a+bz$ as a solution for $z\in\Real$. The equation for $f$ now has the general solution $f(r)=c_1rJ_1(|\alpha|r)+c_2rY_1(|\alpha|r)$, where $J_1$ an $Y_1$ are the Bessel functions of the first and second kind, respectively. The initial condition $f(0)=0$ forces $c_2=0$, and we have the solution for the stream function
\begin{equation*}
  \psi(r,z)
  =
  (a+bz)rJ_1(|\alpha|r).
\end{equation*}

\subsubsection{Case $\alpha^2-c<0$}
Note that in this case $c>0$ and we have the solution $f(r)=c_1rJ_1(r\sqrt{c})$ and $g(z)=ae^{z\sqrt{c-\alpha^2}}+be^{-z\sqrt{c-\alpha^2}}$.
The stream function is then given by
\begin{equation*}
  \psi(r,z)
  =
  rJ_1(r\sqrt{c})\left(ae^{z\sqrt{c-\alpha^2}}+be^{-z\sqrt{c-\alpha^2}}\right).
\end{equation*}

\subsubsection{Case $\alpha^2-c>0$}
The general solution for $g$ is $g(z)=a\sin{\left(z\sqrt{\alpha^2-c}\right)}+b\cos{\left(z\sqrt{\alpha^2-c}\right)}$ and the solution for $f$ depends on the sign of $c$. If $c=0$, then $f(r)=c_1r^2$. If $c>0$, then, as above, $f(r)=c_1rJ_1(r\sqrt{c})$. Finally, if $c<0$, then $f(r)=c_1irJ_1(ir\sqrt{-c})$, which is a real-valued function. The stream function can then take one of the three forms,
\begin{equation*}
  \psi(r,z)
  =
  \left(a\sin{\left(z\sqrt{\alpha^2-c}\right)}+b\cos{\left(z\sqrt{\alpha^2-c}\right)}\right)
  \cdot
  \begin{cases}
    rJ_1(r\sqrt{c})    & \text{ for } c > 0, \\
    r^2                & \text{ for } c = 0, \\
    irJ_1(ir\sqrt{-c}) & \text{ for } c < 0.
  \end{cases}
\end{equation*}

It follows from the above results that there are three types of structures that these solutions have as shown in Fig.~\ref{fig:psi_cylindrical}.
\begin{figure}
  \includegraphics[width=0.32\textwidth]{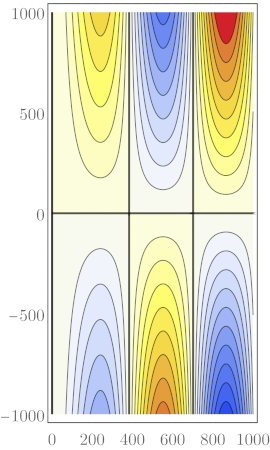}
  \hfill
  \includegraphics[width=0.32\textwidth]{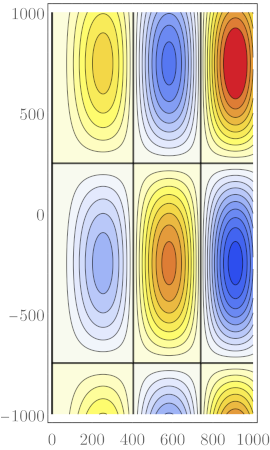}
  \hfill
  \includegraphics[width=0.32\textwidth]{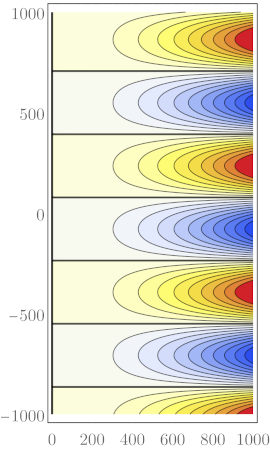}
  \vskip\baselineskip
  \caption{Contour plots of the stream functions $\psi$ obtained using cylindrical coordinates. Case 4.1.1, qualitatively similar to Case 4.1.2 (left); Case 4.1.3 with $0<c<\alpha^2$ (center); and Case 4.1.3 with $c<0$, qualitatively similar to Case 4.1.3 with $c=0$ (right).}
  \label{fig:psi_cylindrical}
\end{figure}
When $c\ge\alpha^2$, the contour plots of $\psi$ consist of vertical infinite or semi-infinite strips separated by vertical lines where $\psi=0$ (left panel). When $0<c<\alpha^2$, the contour plots consist of rectangular blocks (middle panel). When $c\le0$, the contour plots consist of horizontal semi-infinite strips (right panel).

In all displayed contour plots, the horizontal axis is the $r$-axis and the vertical axis is the $z$-axis. The thicker contours indicate where $\psi=0$. The color coding distinguishes between regions in the $rz$-plane where the flow is clockwise or counterclockwise, and it also corresponds to the tangential component of the flow being either into the $rz$-plane or out of it. In all plots, we use $\alpha=0.01$. This choice is affected by the observation that in some tornadic flows with velocity on the order of tens of meters per second the vorticity is on the order of tenths per second. The displayed window is motivated by a tornado scale and the length units can be thought of as meters.

\subsection{Spherical Coordinates}
\label{sec:spherical}
We next consider the spherical coordinates
\begin{gather*}
  r
  =
  R\sin\phi,\\
  z
  =
  R\cos\phi,\\
  R\ge0,\ 0\le\phi\le\pi.
\end{gather*}
In this coordinate system, equation \eqref{eq:eqn_psi} becomes
\begin{equation*}
  \frac{\partial^2\psi}{\partial R^2}
  +
  \frac{1}{R^2}\frac{\partial^2\psi}{\partial\phi^2}
  -
  \frac{\cot\phi}{R^2}\frac{\partial\psi}{\partial\phi}
  =
  -\alpha^2\psi.
\end{equation*}
Under the assumption of separability, $\psi(R,\phi)=f(R)g(\phi)$, this equation becomes
\begin{equation*}
  f''g
  +
  \frac{1}{R^2}\,fg''
  -
  \frac{\cot\phi}{R^2}\,fg'
  =
  -\alpha^2fg,
\end{equation*}
which can be separated into the equations
\begin{gather*}
  R^2f''
  +
  (\alpha^2R^2-c)\,f
  =
  0,\\
  g''
  -
  (\cot\phi)\,g'
  +
  cg
  =
  0
\end{gather*}
for some $c\in\Real$.

The boundary condition on the stream function, $\psi(0,z)=0$, immediately implies that $g(0)=g(\pi)=0$. Additionally, if $f(0)\ne0$ and $g\not\equiv0$, then $\psi$ is not continuous at the origin as can be seen by computing limits of $f(R)g(\phi)$ as $R\to0$ with various values of $\phi$. Therefore, the required boundary conditions are
\begin{equation*}
  f(0)=0\qquad\text{and}\qquad g(0)=g(\pi)=0.
\end{equation*}
With these boundary conditions $\psi$ is a $\mathcal{C}^\infty$ function on $\{(r,z):\ r>0\}$. The problem for $f$ is underdetermined with a regular singular point $R=0$, and for $g$ we have a second-order boundary value eigenvalue problem with regular singular endpoints.

Note that if we define a function $G(\eta)$ with $\eta=\cos\phi$ for $0\le\phi\le\pi$ via $g(\phi)=G(\eta)$, we can rewrite the system as
\begin{equation}
  f''
  +
  \left(\alpha^2-\frac{c}{R^2}\right)\,f
  =
  0,
  \qquad
  G''
  +
  \frac{c}{1-\eta^2}\,G
  =
  0,
  \label{eq:bvp_spherical}
\end{equation}
to be solved for $R>0$ and $-1<\eta<1$, where the boundary conditions become
\begin{equation*}
  f(0)=0\qquad\text{and}\qquad G(-1)=G(1)=0.
\end{equation*}

We now have the following lemma describing the spectrum of the eigenvalue problem for $G$ (or $g$).
\begin{lemma}
  The system \eqref{eq:bvp_spherical} with $(R,\eta)\in(0,\infty)\times(-1,1)$ and the boundary conditions $f(0)=G(\pm1)=0$ has nontrivial solutions if and only if $c=m(m+1)$ for $m\in\N$.
  \label{lem:spherical}
\end{lemma}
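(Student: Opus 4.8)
The plan is to reduce the solvability of the coupled system to a question about the single equation for $G$, and then to recognize the latter as a classical special-function eigenvalue problem. First I would dispose of the radial factor: near the regular singular point $R=0$ the equation $f''+(\alpha^2-c/R^2)f=0$ has indicial exponents $\rho_\pm=\tfrac12(1\pm\sqrt{1+4c})$, and for every $c\in\Real$ the larger exponent satisfies $\Re\rho_+\ge\tfrac12>0$, so there is always a nontrivial solution with $f(R)\to0$ as $R\to0$, i.e.\ with $f(0)=0$. A nontrivial solution of \eqref{eq:bvp_spherical} moreover requires $G\not\equiv0$, since $G\equiv0$ forces $\psi=fg\equiv0$. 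Consequently the system admits a nontrivial solution precisely when the singular two-point problem
\begin{equation*}
  G''+\frac{c}{1-\eta^2}\,G=0,\qquad G(-1)=G(1)=0,
\end{equation*}
does, and the entire content of the lemma lies in the spectrum of this problem.

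Next I would transform the $G$-equation into a recognizable form by setting $G(\eta)=(1-\eta^2)\,w(\eta)$. A direct computation shows that $w$ then satisfies the Gegenbauer (ultraspherical) equation
\begin{equation*}
  (1-\eta^2)\,w''-4\eta\,w'+(c-2)\,w=0,
\end{equation*}
which is the $\lambda=\tfrac32$ member of that family. The boundary conditions transfer cleanly: the endpoints $\eta=\pm1$ are regular singular points of the $G$-equation with indicial exponents $0$ and $1$, so any solution with $G(\pm1)=0$ necessarily has a \emph{simple} zero there; hence $w=G/(1-\eta^2)$ extends to a bounded function on $[-1,1]$. Conversely, for any solution $w$ bounded on $[-1,1]$ the function $(1-\eta^2)w$ vanishes at $\eta=\pm1$. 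Thus nontrivial eigenfunctions $G$ correspond bijectively to nontrivial solutions of the Gegenbauer equation that are bounded on the closed interval $[-1,1]$.

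I would then invoke the classical spectral theory of the Gegenbauer equation. At each endpoint this equation has indicial exponents $0$ and $-1$, so one fundamental solution is analytic while the other blows up like $(1\mp\eta)^{-1}$; a solution bounded at \emph{both} endpoints must coincide, up to scale, with the analytic branch at each, and the connection between the two local analytic solutions forces the Frobenius series to terminate. This happens exactly when $c-2=n(n+3)$ for some integer $n\ge0$, in which case $w=C_{n}^{(3/2)}(\eta)$ is the Gegenbauer polynomial of degree $n$; writing $m=n+1$ gives $c=m(m+1)$ and the eigenfunction $G(\eta)=(1-\eta^2)\,C_{m-1}^{(3/2)}(\eta)$. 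For $m=0$, i.e.\ $c=0$, the equation $G''=0$ with $G(\pm1)=0$ has only the trivial solution, so $\N$ is understood here as the positive integers, and the displayed quantization is both necessary and sufficient.

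The step I expect to be the genuine obstacle is the necessity half of the last paragraph: ruling out bounded nontrivial solutions for every $c$ not of the form $m(m+1)$. This is not a formal manipulation but requires the global connection argument at the two regular singular endpoints---tracking the unique-up-to-scale solution that is bounded at $\eta=-1$ and showing it acquires the singular $(1-\eta)^{-1}$ component at $\eta=+1$ unless $c$ is quantized. In the write-up I would either carry out this Frobenius and connection analysis directly, or more economically route the argument through the associated Legendre equation of order one, via the substitution $g(\phi)=\sin\phi\,v(\phi)$ in the $\phi$ variable, and cite the standard fact that $P_\nu^{1}$ is regular on $[-1,1]$ if and only if $\nu\in\N$, referring to \cite{NIST}.
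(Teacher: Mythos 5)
Your proposal is correct, but it reaches the quantization $c=m(m+1)$ by a genuinely different route than the paper. The paper writes the general solution of the $G$-equation explicitly as $c_1\,{}_2{\mathcal F}_1\bigl(-\tfrac{d_1}{4},\tfrac{d_2}{4};\tfrac12;\eta^2\bigr)+c_2\,\eta\,{}_2{\mathcal F}_1\bigl(-\tfrac{d_2}{4},\tfrac{d_1}{4};\tfrac32;\eta^2\bigr)$, evaluates the limits at $\eta\to\pm1$ via the Gauss summation formula, and observes that by parity no linear combination can vanish at both endpoints unless one of the two limits is itself zero, which happens exactly at poles of the $\Gamma$ factors and forces $m\in\Z$; it then exhibits $f(R)=\sqrt{R}\,J_{(2m+1)/2}(\alpha R)$ for the admissible $c$. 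You instead substitute $G=(1-\eta^2)w$ to land on the $\lambda=\tfrac32$ Gegenbauer equation and convert the two-point condition $G(\pm1)=0$ into boundedness of $w$ on $[-1,1]$ (your indicial bookkeeping at $R=0$ and at $\eta=\pm1$ is correct, and your eigenfunctions $(1-\eta^2)C_{m-1}^{(3/2)}(\eta)$ agree with the paper's hypergeometric polynomials). Your version has two advantages: it disposes of the $f$-equation uniformly for \emph{all} $c$ (the paper only constructs $f$ for admissible $c$, which is all it needs but is less transparent for the ``only if'' direction), and it places the problem in the standard singular Sturm--Liouville framework. The cost is exactly where you say it is: the necessity half rests on the classical fact that the Gegenbauer (or order-one associated Legendre) equation has a solution bounded at both endpoints only when the series terminates. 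In the write-up you should either cite that fact precisely or prove it via the two-term recurrence $a_{k+2}/a_k=\bigl(k(k+3)-(c-2)\bigr)/\bigl((k+1)(k+2)\bigr)\to1$, which shows a non-terminating even or odd series behaves like $(1-\eta^2)^{-1}$ near $\eta=\pm1$; note that the even/odd decomposition must be made explicit so that both parities of $n=m-1$ are covered, mirroring the parity argument the paper runs on its two fundamental solutions. Both proofs ultimately lean on cited special-function facts (the paper on the ${}_2{\mathcal F}_1$ evaluation at $1$, yours on the boundedness quantization), so neither is more elementary; yours is arguably more conceptual, the paper's more explicit.
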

\begin{proof}
When $c=0$, the second equation in \eqref{eq:bvp_spherical} has only the trivial solution, so let's assume $c\ne0$. The general solution can then be written in terms of the hypergeometric function ${}_2{\mathcal F}_1$
\begin{equation*}
  G(\eta)
  =
  c1\cdot{}_2{\mathcal F}_1\left(-\frac{d_1}{4},\frac{d_2}{4},\frac12,\eta^2\right)
  +
  c_2\eta\cdot{}_2{\mathcal F}_1\left(-\frac{d_2}{4},\frac{d_1}{4},\frac32,\eta^2\right),
\end{equation*}
where $c_1,c_2\in\Real$ and $d_1=\sqrt{1+4c}+1$ and $d_2=\sqrt{1+4c}-1$. We now have
\begin{equation}
  \begin{split}
    \lim_{\eta\to1}{}_2{\mathcal F}_1\left(-\frac{d_1}{4},\frac{d_2}{4},\frac12,\eta^2\right)
    &=
    \frac{\sqrt{\pi}}{\Gamma\left(\dfrac{3-\sqrt{1+4c}}{4}\right)\Gamma\left(\dfrac{3+\sqrt{1+4c}}{4}\right)},\\
    \lim_{\eta\to1}{}_2{\mathcal F}_1\left(-\frac{d_2}{4},\frac{d_1}{4},\frac32,\eta^2\right)
    &=
    -\frac{2\sqrt{\pi}}{c\,\Gamma\left(\dfrac{1-\sqrt{1+4c}}{4}\right)\Gamma\left(\dfrac{1+\sqrt{1+4c}}{4}\right)}.
  \end{split}
  \label{eq:spherical_limits_1}
\end{equation}
These limiting values will be $0$ at the poles of the $\Gamma$ function, i.e., when the arguments of $\Gamma$ are non-positive integers. In all other cases, the limiting values of the first fundamental solution at $\pm1$ are some nonzero value $a$, while the limits of the second fundamental solution will be $\pm b$ for some nonzero $b$. No linear combination of these solution will satisfy both boundary conditions. When $c<-1/4$, all of the arguments in the $\Gamma$ functions are non-real, so we can assume $c\ge-1/4$. Any such $c$ can be written as a product $c=m(m+1)$ for some $m\in\Real$. In this case, the limits in \eqref{eq:spherical_limits_1} can be rewriten as
\begin{gather*}
  \lim_{\eta\to1}{}_2{\mathcal F}_1\left(-\frac{m+1}{2},\frac{m}{2},\frac12,\eta^2\right)
  =
  \frac{\sqrt{\pi}}{\Gamma\left(\dfrac{1-m}{2}\right)\Gamma\left(\dfrac{m+2}{2}\right)},\\
  \lim_{\eta\to1}{}_2{\mathcal F}_1\left(-\frac{m}{2},\frac{m+1}{2},\frac32,\eta^2\right)
  =
  -\frac{2\sqrt{\pi}}{m(m+1)\,\Gamma\left(-\dfrac{m}{2}\right)\Gamma\left(\dfrac{m+1}{2}\right)},
\end{gather*}
and the requirement of at least one of the two limits being $0$ implies $m\in\Z$. Since negative values of $m$ produce the same $c=m(m+1)$ as nonnegative ones, and since $m=0$ corresponds to $c=0$, we have that only $m\in\N$ need to be considered to produce nontrivial solutions for $G$. These solutions are, up to a multiplicative constant,
\begin{equation}
  G(\eta)
  =
  \begin{cases}
    {}_2{\mathcal F}_1\left(-\dfrac{m+1}{2},\dfrac{m}{2},\dfrac12,\eta^2\right)       & \text{ for $m$ odd,} \\
    \eta\cdot{}_2{\mathcal F}_1\left(-\dfrac{m}{2},\dfrac{m+1}{2},\dfrac32,\eta^2\right) & \text{ for $m$ even},
  \end{cases}
  \label{eq:spherical_G_eta}
\end{equation}
and it follows from the definition of ${}_2{\mathcal F}_1$ that they are all polynomials (see also \cite{marsh96}).

For any $m\in\N$ and $c=m(m+1)$, the equation for $f$ in \eqref{eq:bvp_spherical} together with its boundary condition has a nontrivial solution in terms of the Bessel function of the first kind, again up to a multiplicative constant,
\begin{equation}
  f(R)
  =
  \sqrt{R}\,J_\beta(\alpha R)
  \quad\text{ with }\quad
  \beta=\frac{2m+1}{2}.
  \label{eq:spherical_f_R}
\end{equation}
\end{proof}

Using \eqref{eq:spherical_G_eta} and \eqref{eq:spherical_f_R}, the stream function in spherical coordinates has the form
\begin{equation*}
  \psi(R,\eta)
  =
  \sqrt{R}\,J_\beta(\alpha R)
  \cdot
  \begin{cases}
    {}_2{\mathcal F}_1\left(-\dfrac{m+1}{2},\dfrac{m}{2},\dfrac12,\eta^2\right)       & \text{ for $m$ odd,} \\
    \eta\cdot{}_2{\mathcal F}_1\left(-\dfrac{m}{2},\dfrac{m+1}{2},\dfrac32,\eta^2\right) & \text{ for $m$ even},
  \end{cases}
\end{equation*}
where $m\in\N$ and $\beta=(2m+1)/2$. In the original cylindrical coordinates $(r,z)$, we have
\begin{equation*}
  \psi(r,z)
  =
  (r^2+z^2)^{1/4}\,J_\beta(\alpha\sqrt{r^2+z^2})
  \cdot
  \begin{cases}
    {}_2{\mathcal F}_1\left(-\dfrac{m+1}{2},\dfrac{m}{2},\dfrac12,\dfrac{z^2}{r^2+z^2}\right)       & \text{ for $m$ odd,} \\
    \dfrac{z}{\sqrt{r^2+z^2}}\cdot{}_2{\mathcal F}_1\left(-\dfrac{m}{2},\dfrac{m+1}{2},\dfrac32,\dfrac{z^2}{r^2+z^2}\right) & \text{ for $m$ even}.
  \end{cases}
\end{equation*}

Contour plots of some of these stream functions have been shown in literature (see, e.g., \cite{marsh96}), but we include them in Fig.~\ref{fig:psi_spherical} for completeness and comparison to solutions in other coordinate systems.
\begin{figure}
  \includegraphics[width=0.32\textwidth]{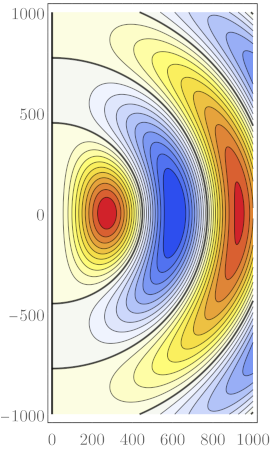}
  \hfill
  \includegraphics[width=0.32\textwidth]{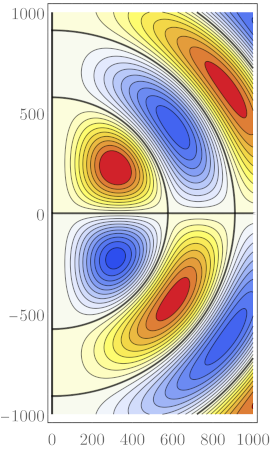}
  \hfill
  \includegraphics[width=0.32\textwidth]{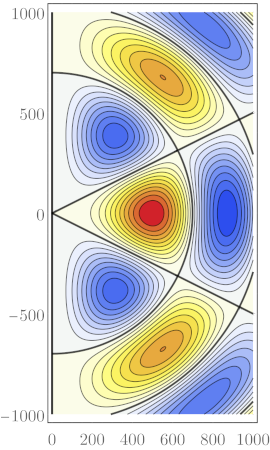}
  \vskip\baselineskip
  \includegraphics[width=0.32\textwidth]{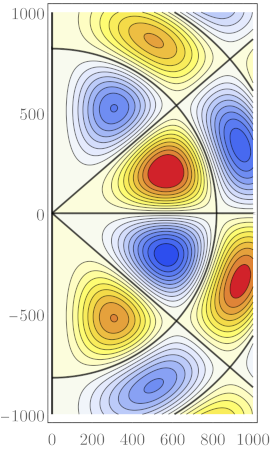}
  \hfill
  \includegraphics[width=0.32\textwidth]{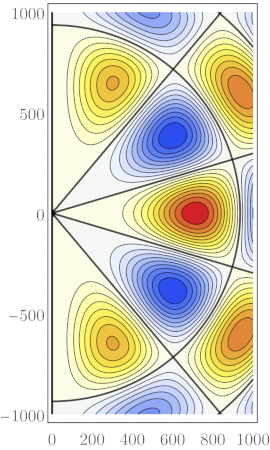}
  \hfill
  \includegraphics[width=0.32\textwidth]{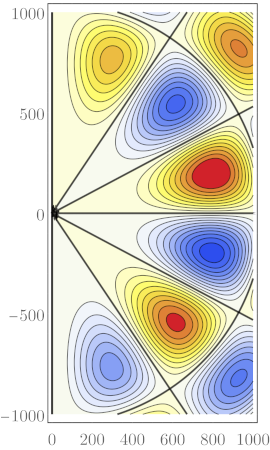}
  \caption{Contour plots of the stream functions $\psi$ obtained using spherical coordinates. Results for the first six eigenmodes ($m=1,\dots,6$) are shown left to right and top to bottom.}
  \label{fig:psi_spherical}
\end{figure}

\subsection{Paraboloidal Coordinates}
\label{sec:paraboloidal}
The paraboloidal coordinates are
\begin{gather*}
  r
  =
  uv,\\
  z
  =
  \frac{1}{2}(v^2-u^2),\\
  u\ge0,\ v\ge0.
\end{gather*}
The curves along which $u$ and $v$ are constant are shown in Fig.~\ref{fig:ell-par-cyl} on the left.
\begin{figure}
  \begin{center}
    \includegraphics[width=0.3\textwidth]{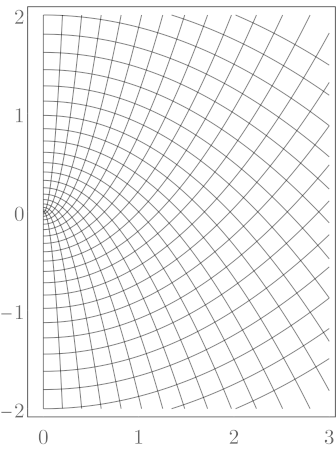}
    \hfill
    \includegraphics[width=0.3\textwidth]{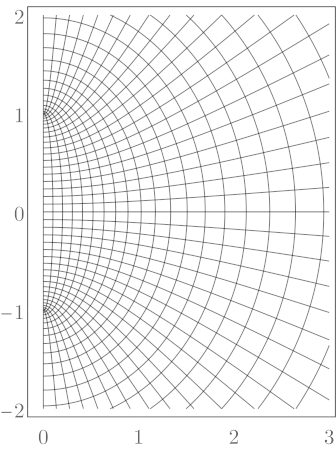}
    \hfill
    \includegraphics[width=0.3\textwidth]{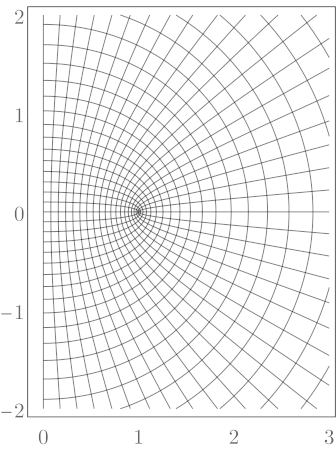}
  \end{center}
  \caption{Visualization of the paraboloidal (left), prolate spheroidal (center) and oblate spheroidal (right) coordinate systems in the $rz$-plane, the last two both with $a=1$. In the case of the paraboloidal coordinates, parabolas opening up correspond to $u$ constant and parabolas opening down to $v$ constant. For both spheroidal systems, ellipses correspond to $u$ constant and hyperbolas to $v$ constant.}
  \label{fig:ell-par-cyl}
\end{figure}
The curves with constant $u$ are the parabolas opening up and the curves with constant $v$ are the parabolas opening down. It is easy to see that
\begin{equation*}
  u^2
  =
  \sqrt{r^2+z^2}-z,
  \qquad
  v^2
  =
  \sqrt{r^2+z^2}+z.
\end{equation*}

In this coordinate system, equation \eqref{eq:eqn_psi} becomes
\begin{equation*}
  \frac{1}{u^2+v^2}
  \left(
    \frac{\partial^2\psi}{\partial u^2}
    +
    \frac{\partial^2\psi}{\partial v^2}
    -
    \frac{1}{u}\,\frac{\partial\psi}{\partial u}
    -
    \frac{1}{v}\,\frac{\partial\psi}{\partial v}
  \right)
  =
  -\alpha^2\psi.
\end{equation*}
Under the assumption of separability, $\psi(u,v)=f(u)g(v)$, becomes
\begin{equation*}
  f''g
  +
  fg''
  -
  \frac{1}{u}\,f'g
  -
  \frac{1}{v}\,fg'
  =
  -\alpha^2(u^2+v^2)\,fg,
\end{equation*}
which can be separated into the equations
\begin{gather*}
  f''
  -
  \frac{1}{u}\,f'
  +
  (\alpha^2u^2-c)\,f
  =
  0,\\
  g''
  -
  \frac{1}{v}\,g'
  +
  (\alpha^2v^2+c)\,g
  =
  0
\end{gather*}
for $c\in\Real$.

The boundary condition on the stream function, $\psi(0,z)=0$, immediately implies
\begin{equation*}
  f(0)=0\qquad\text{and}\qquad g(0)=0.
\end{equation*}
With these boundary conditions $\psi$ is a $\mathcal{C}^\infty$ function on $\{(r,z):\ r>0\}$. Hence both problems for $f$ and $g$ are underdetermined, and both problems have a regular singular point at $0$.

Note that if we define functions $F(\mu)$ and $G(\eta)$ with $\mu=u^2$ and $\eta=v^2$ via $f(u)=F(\mu)$ and $g(v)=G(\eta)$, we can rewrite the system as
\begin{gather}
  F''
  +
  \frac{\alpha^2\mu-c}{4\mu}\,F
  =
  0,
  \qquad
  G''
  +
  \frac{\alpha^2\eta+c}{4\eta}\,G
  =
  0,
  \label{eq:FG_parabolic}
\end{gather}
to be solved for $\mu>0$ and $\eta>0$, where the boundary conditions become
\begin{equation*}
  F(0)=0\qquad\text{and}\qquad G(0)=0.
\end{equation*}
The special case when $c=0$ immediately results in
\begin{equation}
  F(\mu)
  =
  c_1\sin\left(\frac{\alpha\mu}{2}\right),
  \qquad
  G(\eta)
  =
  c_2\sin\left(\frac{\alpha\eta}{2}\right).
  \label{eq:paraboloidal_special}
\end{equation}
The general solution to \eqref{eq:FG_parabolic} in the case $c\ne0$ can be written using the hypergeometric functions ${}_1{\mathcal F}_1$ and $U$
\begin{gather*}
  F(\mu)
  =
  \mu\,e^{-\frac12i|\alpha|\mu}
    \left[
      c_1\cdot{}_1{\mathcal F}_1\left(1+\frac{c}{4i|\alpha|},2,i|\alpha|\mu\right)
      +
      c_3\,U\left(1+\frac{c}{4i|\alpha|},2,i|\alpha|\mu\right)
    \right],\\
  G(\eta)
  =
  \eta\,e^{-\frac12i|\alpha|\eta}
    \left[
      c_2\cdot{}_1{\mathcal F}_1\left(1-\frac{c}{4i|\alpha|},2,i|\alpha|\eta\right)
      +
      c_4\,U\left(1-\frac{c}{4i|\alpha|},2,i|\alpha|\eta\right)
    \right],
\end{gather*}
where $c_i\in\Real$ for $i=1,\dots,4$. We now have
\begin{equation*}
    \lim_{\mu\to0}F(\mu)
    =
    \frac{4c_3}{c\,\Gamma\left(\dfrac{c}{4i|\alpha|}\right)},
    \qquad
    \lim_{\eta\to0}G(\eta)
    =
    -\frac{4c_4}{c\,\Gamma\left(-\dfrac{c}{4i|\alpha|}\right)},
\end{equation*}
and therefore to satisfy the boundary conditions we need to have $c_3=c_4=0$. The solution for $c\ne0$ is then
\begin{equation}
  \begin{split}
    F(\mu)
    =
    c_1\,\mu\,e^{-\frac12i|\alpha|\mu}{}_1{\mathcal F}_1\left(1+\frac{c}{4i|\alpha|},2,i|\alpha|\mu\right),\quad c_1\in\Real,\\
    G(\eta)
    =
    c_2\,\eta\,e^{-\frac12i|\alpha|\eta}{}_1{\mathcal F}_1\left(1-\frac{c}{4i|\alpha|},2,i|\alpha|\eta\right),\quad c_2\in\Real.
  \end{split}
  \label{eq:paraboloidal_solution}
\end{equation}
\begin{remark}
  By using (13.2.4) and (13.4.1) in \cite{NIST} and trigonometric identities, it can be shown that the solutions in \eqref{eq:paraboloidal_solution} are real valued and, in fact, are given by
  \begin{gather*}
    F(\mu)
    =
    C_1\,\mu\int_0^1\cos\left[\frac{c}{4|\alpha|}\log\left(\frac{1-u}{u}\right)+|\alpha|\mu\left(u-\frac12\right)\right]du,\\
    G(\eta)
    =
    C_2\,\eta\int_0^1\cos\left[-\frac{c}{4|\alpha|}\log\left(\frac{1-u}{u}\right)+|\alpha|\eta\left(u-\frac12\right)\right]du,
  \end{gather*}
  where $C_i=c_i\,\dfrac{4|\alpha|}{c\,\pi}\sinh\left(\dfrac{c\,\pi}{4|\alpha|}\right)$ for $i=1,2$.
\end{remark}

Using \eqref{eq:paraboloidal_solution}, the stream function in paraboloidal coordinates has the form
\begin{equation}
  \psi(u,v)
  =
  u^2v^2\,e^{-\frac12i|\alpha|(u^2+v^2)}{}_1{\mathcal F}_1\left(1+\frac{c}{4i|\alpha|},2,i|\alpha|u^2\right){}_1{\mathcal F}_1\left(1-\frac{c}{4i|\alpha|},2,i|\alpha|v^2\right).
  \label{eq:psi_paraboloidal}
\end{equation}
When $c=0$, this corresponds to
\begin{equation*}
  \psi(u,v)
  =
  \sin\left(\frac{\alpha u^2}{2}\right)\sin\left(\frac{\alpha v^2}{2}\right),
\end{equation*}
which agrees with the special case solution given in \eqref{eq:paraboloidal_special}.

Contour plots of three stream functions obtained in paraboloidal coordinates are shown in Fig.~\ref{fig:psi_paraboloidal}.
\begin{figure}
  \includegraphics[width=0.32\textwidth]{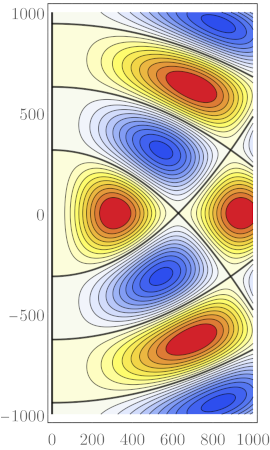}
  \hfill
  \includegraphics[width=0.32\textwidth]{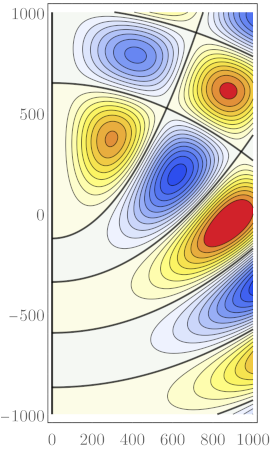}
  \hfill
  \includegraphics[width=0.32\textwidth]{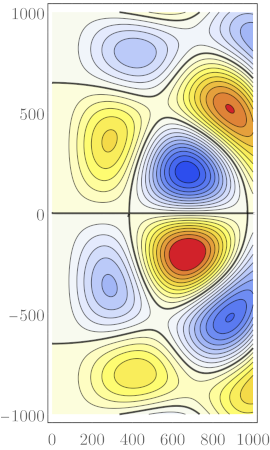}
  \caption{Contour plots of the stream functions $\psi$ obtained using paraboloidal coordinates. The left and middle panels correspond to \eqref{eq:psi_paraboloidal} with $c=0$ and $c=-0.05$, respectively. The right panel shows the stream function obtained from \eqref{eq:psi_paraboloidal} with $c=-0.05$ by applying transformation (5) from Section~\ref{sec:symmetry_solutions} with $\zeta=0$.}
  \label{fig:psi_paraboloidal}
\end{figure}
On the left is the case with $c=0$ and the middle panel corresponds to $c=-0.05$. It is clear from \eqref{eq:psi_paraboloidal} and the definition of the paraboloidal coordinates that changing the sign of $c$ corresponds to interchanging the roles of $u$ and $v$ and therefore changing the sign of $z$. Consequently, a contour plot for $c=0.05$ can be obtained from that with $c=-0.05$. Since the flows with $c\ne0$ do not appear to have any kind of symmetry with respect to the $z=0$ plane, they can be used, together with transformation (5) from Section~\ref{sec:symmetry_solutions} to generate new flows for which $z=0$ is a stream surface. This is illustrated in the right panel of  Fig.~\ref{fig:psi_paraboloidal} in which \eqref{eq:psi_paraboloidal} with $c=-0.05$ was used.

Finally, we note that for large $\mu$ the solution for $F$ in \eqref{eq:FG_parabolic} resembles that of $F''+\dfrac{\alpha^2}{4}F=0$ and therefore exhibits near periodicity in $\mu$. Consequently, the contour plots of the corresponding stream function $\psi$ consist of repeating blocks along some of the curves shown in the left panel of Fig.~\ref{fig:ell-par-cyl} as indicated by the contour plots in Fig.~\ref{fig:psi_paraboloidal}.

\subsection{Prolate Spheroidal Coordinates}
\label{sec:elliptic1}
The prolate spheroidal coordinates are
\begin{align*}
  r
  =
  a\sinh u\sin v,\\
  z
  =
  a\cosh u\cos v,\\
  u\ge0,\ 0\le v\le\pi.
\end{align*}
The curves along which $u$ and $v$ are constant are shown in Fig.~\ref{fig:ell-par-cyl} in the center. The curves with constant $u$ are the hyperbolas and the curves with constant $v$ are the ellipses.

In order to obtain $u$ and $v$ from $r$ and $z$, we can use the conversion formulas from \cite{sun17}, which, in this case, result in
\begin{equation*}
  u
  =
  \frac12\log\left(1-2q+2\sqrt{q^2-q}\right),
  \qquad
  v
  =
  \begin{cases}
    \arcsin\sqrt{p}     & \text{ if } z\ge0,\\
    \pi-\arcsin\sqrt{p} & \text{ if } z<0,
  \end{cases}
\end{equation*}
where
\begin{equation*}
  p
  =
  \frac{-B+\sqrt{B^2+4a^2r^2}}{2a^2},
  \qquad
  q
  =
  \frac{-B-\sqrt{B^2+4a^2r^2}}{2a^2},
  \qquad\text{with }
  B
  =
  r^2+z^2-a^2.
\end{equation*}

In this coordinate system, equation \eqref{eq:eqn_psi} becomes
\begin{equation*}
  \frac{1}{a^2(\cosh^2u-\cos^2v)}
  \left(
    \frac{\partial^2\psi}{\partial u^2}
    +
    \frac{\partial^2\psi}{\partial v^2}
    -
    \coth u\,\frac{\partial\psi}{\partial u}
    -
    \cot v\,\frac{\partial\psi}{\partial v}
  \right)
  =
  -\alpha^2\psi.
\end{equation*}
Under the assumption of separability, $\psi(u,v)=f(u)g(v)$, becomes
\begin{equation*}
  f''g
  +
  fg''
  -
  (\coth u)\,f'g
  -
  (\cot v)\,fg'
  =
  -\alpha^2a^2(\cosh^2u-\cos^2v)\,fg,
\end{equation*}
which, with $A=\alpha a$, can be separated into the equations
\begin{gather*}
  f''
  -
  (\coth u)\,f'
  +
  (A^2\cosh^2u-c)\,f
  =
  0,\\
  g''
  -
  (\cot v)\,g'
  -
  (A^2\cos^2v-c)\,g
  =
  0
\end{gather*}
for $c\in\Real$.

Recall that the boundary condition on the stream function is $\psi(0,z)=0$. The $z$-axis consists of three intervals: $(-\infty,-a]$ corresponding to $v=\pi$, $[-a,a]$ corresponding to $u=0$, and $[a,\infty)$ corresponding to $v=0$. Therefore, necessary boundary conditions for $f$ and $g$ are
\begin{equation*}
  f(0)=0\qquad\text{and}\qquad g(0)=g(\pi)=0.
\end{equation*}
With these boundary conditions $\psi$ is a $\mathcal{C}^\infty$ function on $\{(r,z):\ r>0\}$. Hence the problem for $f$ is underdetermined with a regular singular point at $0$, and the problem for $g$ becomes a second-order boundary value eigenvalue problem with regular singular endpoints. The problem for $g$ is a regular Sturm--Louville problem~\cite{stakgold11}, and therefore there exists a countable, real, bounded below spectrum of values for $c$.

Note that if we define functions $F(\mu)$ and $G(\eta)$ with $\mu=\cosh{u}$ and $\eta=\cos{v}$ via $f(u)=F(\mu)$ and $g(v)=G(\eta)$, we can rewrite the system as
\begin{gather}
  F''
  +
  \frac{A^2\mu^2-c}{\mu^2-1}\,F
  =
  0,
  \qquad
  G''
  +
  \frac{c-A^2\eta^2}{1-\eta^2}\,G
  =
  0,
  \label{eq:FG_case1}
\end{gather}
to be solved for $\mu>1$ and $-1<\eta<1$, where the boundary conditions become
\begin{equation*}
  F(1)=0\qquad\text{and}\qquad G(-1)=G(1)=0.
\end{equation*}
Also note that in this case the two differential equations for $F$ and $G$ are actually the same, though solved on different intervals.

The easy case to solve analytically is when $c=A^2$, since in this case the differential equations reduce to $F''+A^2F=0$ and $G''+A^2G=0$. The equation for $G$, together with its boundary conditions, then forces $A=n\pi/2$ for $n\in\Z$, and we obtain two sets of solutions,
\begin{equation*}
  F(\mu)
  =
  c_1\cos\left(\frac{(2n+1)\pi}{2}\mu\right),
  \qquad
  G(\nu)
  =
  c_2\cos\left(\frac{(2n+1)\pi}{2}\nu\right)
  \qquad\text{ for }n\in\Z,
\end{equation*}
or
\begin{equation*}
  F(\mu)
  =
  c_1\sin\left(n\pi\mu\right),
  \qquad
  G(\nu)
  =
  c_2\sin\left(n\pi\nu\right)
  \qquad\text{ for }n\in\Z.
\end{equation*}

In the general case without the assumption $c=A^2$, we have not found explicit solutions analytically and instead approximated them numerically. The idea is to first approximate the eigenvalues $c$ by approximating the solution to the boundary-value problem for $G$, and once $c$ is approximated, then approximate the solution to the initial-value problem for $F$. The problem for $F$ can be supplied with a second initial condition $F'(1)=1$, because other values simply lead to different scalings of $F$, and thus of $\psi$, which does not affect the stream surfaces of $\psi$. Contour plots for various values of the parameters are shown in Figs.~\ref{fig:case1_A2_1}--\ref{fig:case1_A2_100}. In all of them $\alpha=0.01$ and $A^2$ ranges from $1$ to $100$. We note that the white lines visible along the $r$-axis for the odd eigenmodes is due to the contour plotter in {\it Mathematica} struggling with the piecewise function that converts $r$ and $z$ to $v$. The same is true in the next section.
\begin{figure}
  \includegraphics[width=0.32\textwidth]{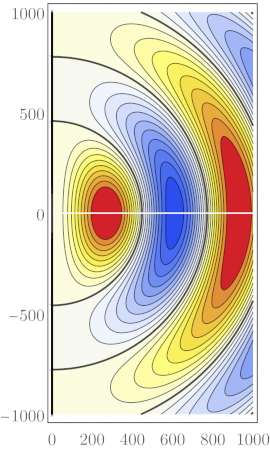}
  \hfill
  \includegraphics[width=0.32\textwidth]{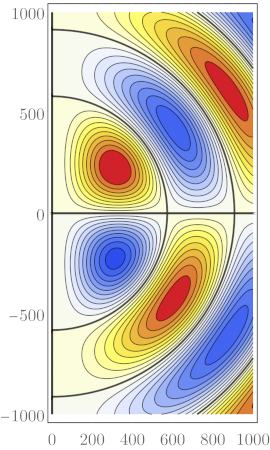}
  \hfill
  \includegraphics[width=0.32\textwidth]{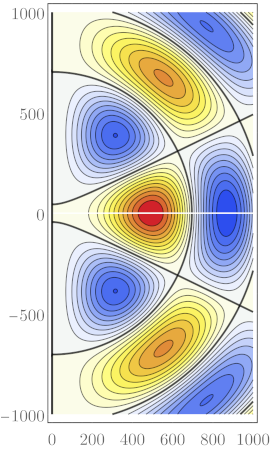}
  \vskip\baselineskip
  \includegraphics[width=0.32\textwidth]{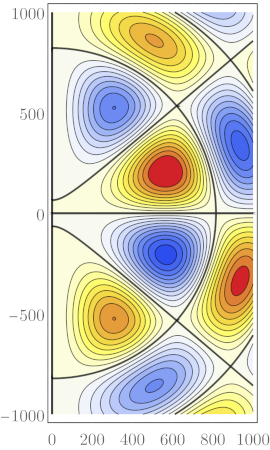}
  \hfill
  \includegraphics[width=0.32\textwidth]{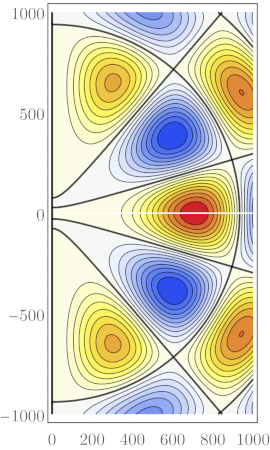}
  \hfill
  \includegraphics[width=0.32\textwidth]{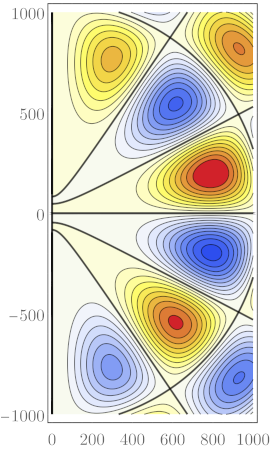}
  \caption{Contour plots of the stream functions $\psi$ obtained using prolate spheroidal coordinates. Results for the first six eigenmodes with $A^2=1$ are shown left to right and top to bottom.}
  \label{fig:case1_A2_1}
\end{figure}

\begin{figure}
  \includegraphics[width=0.32\textwidth]{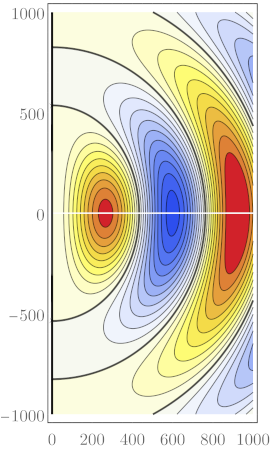}
  \hfill
  \includegraphics[width=0.32\textwidth]{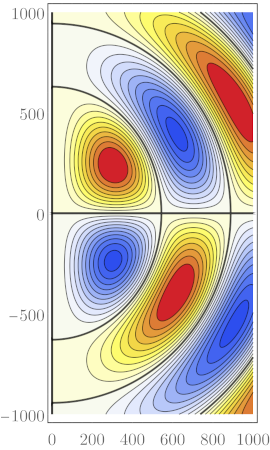}
  \hfill
  \includegraphics[width=0.32\textwidth]{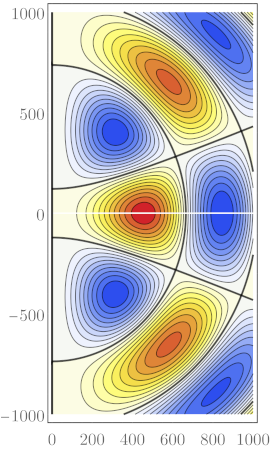}
  \vskip\baselineskip
  \includegraphics[width=0.32\textwidth]{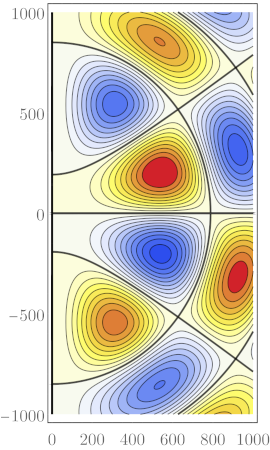}
  \hfill
  \includegraphics[width=0.32\textwidth]{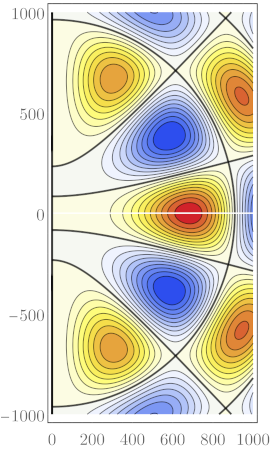}
  \hfill
  \includegraphics[width=0.32\textwidth]{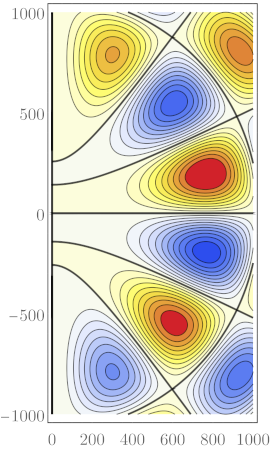}
  \caption{Contour plots of the stream functions $\psi$ obtained using prolate spheroidal coordinates. Results for the first six eigenmodes with $A^2=10$ are shown left to right and top to bottom.}
  \label{fig:case1_A2_10}
\end{figure}

\begin{figure}
  \includegraphics[width=0.32\textwidth]{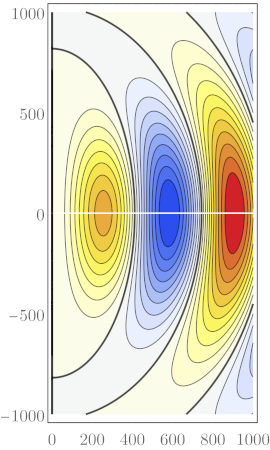}
  \hfill
  \includegraphics[width=0.32\textwidth]{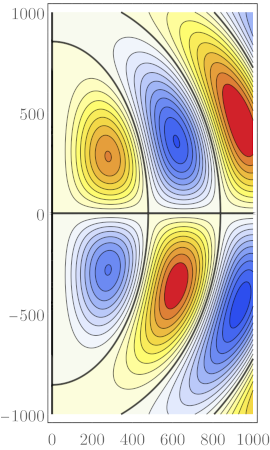}
  \hfill
  \includegraphics[width=0.32\textwidth]{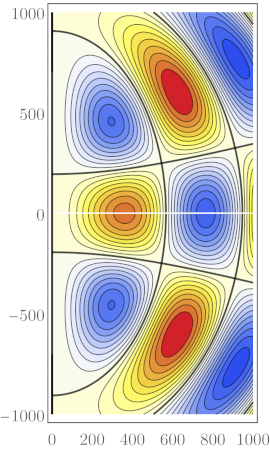}
  \vskip\baselineskip
  \includegraphics[width=0.32\textwidth]{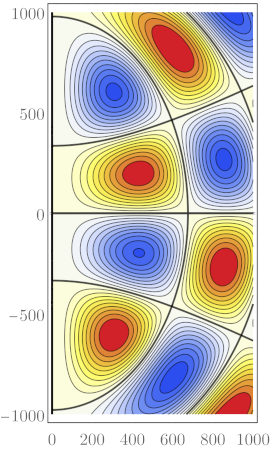}
  \hfill
  \includegraphics[width=0.32\textwidth]{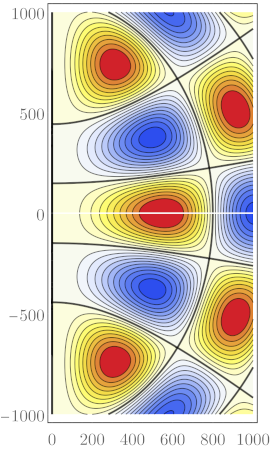}
  \hfill
  \includegraphics[width=0.32\textwidth]{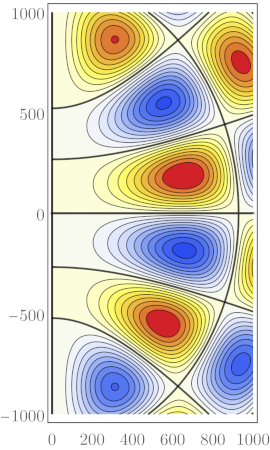}
  \caption{Contour plots of the stream functions $\psi$ obtained using prolate spheroidal coordinates. Results for the first six eigenmodes with $A^2=50$ are shown left to right and top to bottom.}
  \label{fig:case1_A2_50}
\end{figure}

\begin{figure}
  \includegraphics[width=0.32\textwidth]{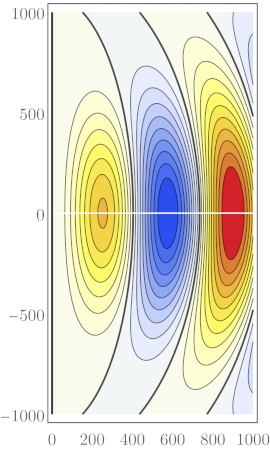}
  \hfill
  \includegraphics[width=0.32\textwidth]{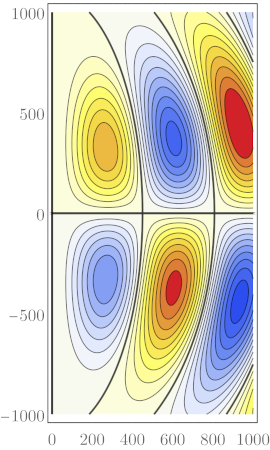}
  \hfill
  \includegraphics[width=0.32\textwidth]{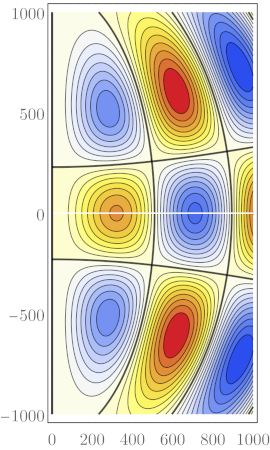}
  \vskip\baselineskip
  \includegraphics[width=0.32\textwidth]{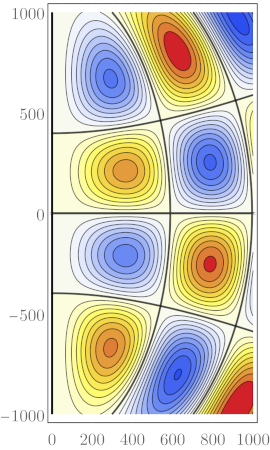}
  \hfill
  \includegraphics[width=0.32\textwidth]{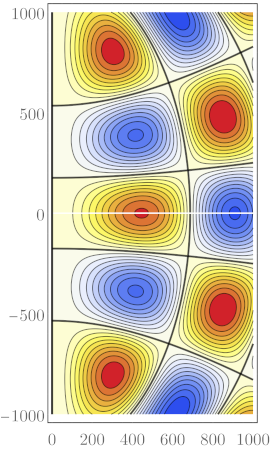}
  \hfill
  \includegraphics[width=0.32\textwidth]{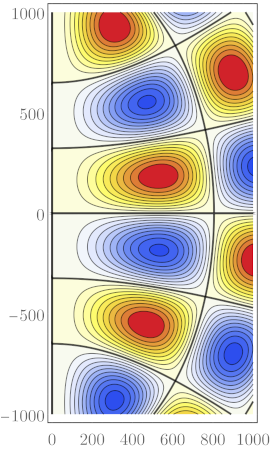}
  \caption{Contour plots of the stream functions $\psi$ obtained using prolate spheroidal coordinates. Results for the first six eigenmodes with $A^2=100$ are shown left to right and top to bottom.}
  \label{fig:case1_A2_100}
\end{figure}

We again note that for large $\mu$ the solution for $F$ in \eqref{eq:FG_case1} resembles that of $F''+A^2F=0$ and therefore exhibits near periodicity in $\mu$. Consequently, the contour plots of the corresponding stream function $\psi$ consist of repeating blocks along some of the curves shown in the middle panel of Fig.~\ref{fig:ell-par-cyl} as indicated by the contour plots in Figs.~\ref{fig:case1_A2_1}--\ref{fig:case1_A2_100}.

\subsection{Oblate Spheroidal Coordinates}
\label{sec:elliptic2}
The oblate spheroidal coordinates are
\begin{gather*}
  r
  =
  a\cosh u\cos v,\\
  z
  =
  a\sinh u\sin v,\\
  u\ge0,\ -\pi/2\le v\le\pi/2.
\end{gather*}
The curves along which $u$ and $v$ are constant are shown in Fig.~\ref{fig:ell-par-cyl} on the right. The curves with constant $u$ are the ellipses and the curves with constant $v$ are the hyperbolas.

In order to obtain $u$ and $v$ from $r$ and $z$, we can again use the conversion formulas from \cite{sun17}, which now have the form
\begin{equation*}
  u
  =
  \frac12\log\left(1-2q+2\sqrt{q^2-q}\right),
  \qquad
  v
  =
  \begin{cases}
    \ \ \ \arcsin\sqrt{p} & \text{ if } z\ge0,\\
    -\arcsin\sqrt{p}      & \text{ if } z<0,
  \end{cases}
\end{equation*}
where
\begin{equation*}
  p
  =
  \frac{-B+\sqrt{B^2+4a^2z^2}}{2a^2},
  \qquad
  q
  =
  \frac{-B-\sqrt{B^2+4a^2z^2}}{2a^2},
  \qquad\text{with }
  B
  =
  r^2+z^2-a^2.
\end{equation*}

In this coordinate system, equation \eqref{eq:eqn_psi} becomes
\begin{equation*}
  \frac{1}{a^2(\sinh^2u+\sin^2v)}
  \left(
    \frac{\partial^2\psi}{\partial u^2}
    +
    \frac{\partial^2\psi}{\partial v^2}
    -
    \tanh u\,\frac{\partial\psi}{\partial u}
    +
    \tan v\,\frac{\partial\psi}{\partial v}
  \right)
  =
  -\alpha^2\psi.
\end{equation*}
Under the assumption of separability, $\psi(u,v)=f(u)g(v)$, this equation becomes
\begin{equation*}
  f''g
  +
  fg''
  -
  (\tanh u)\,f'g
  +
  (\tan v)\,fg'
  =
  -\alpha^2a^2(\sinh^2u+\sin^2v)\,fg,
\end{equation*}
which, with $A=\alpha a$, can be separated into the equations
\begin{gather*}
  f''
  -
  (\tanh u)\,f'
  +
  (A^2\sinh^2u-c)\,f
  =
  0,\\
  g''
  +
  (\tan v)\,g'
  +
  (A^2\sin^2v+c)\,g
  =
  0,
\end{gather*}
for $c\in\Real$.

The boundary condition on the stream function is $\psi(0,z)=0$. The positive $z$-axis corresponds to $v=\pi/2$ and the negative $z$-axis corresponds to $v=-\pi/2$. Therefore, necessary boundary conditions are $g(-\pi/2)=g(\pi/2)=0$.

However, several more conditions have to be checked to ensure that $\psi$ is continuous and differentiable in $\{(r,z):\ r>0\}$. We first observe that the symmetry in the boundary-value eigenvalue problem for $g$ implies that $g$ is either an even or an odd function of $v$. The $r$-axis is divided into intervals $(0,a]$ corresponding to $u=0$ and $[a,\infty)$ corresponding to $v=0$. To ensure continuity of $\psi$ in $\{(r,z):\ r>0\}$, only continuity across the segment $(0,a]$ needs to be addressed. If $g$ is odd, there is a discontinuity in $\psi$ unless $f(0)=0$. If $g$ is even, $\psi$ is continuous across the segment without a restriction on $f(0)$.

It can be verified that $\psi$ is differentiable in the $z$-direction across the segment $(0,a]$ on the $r$-axis, so no new requirements arise there. However, in the case of an even $g$, the stream function $\psi$ is differentiable in the $r$-direction at the point $(a,0)$ on the $r$-axis only when $f'(0)=0$.

Therefore, the boundary conditions are $g(-\pi/2)=g(\pi/2)=0$ and either $f(0)=0$ when $g$ is an odd function, or $f'(0)=0$ when $g$ is an even function. Hence the problem for $f$ is underdetermined but with no singular points, and the problem for $g$ again becomes a second-order boundary-value eigenvalue problem with regular singular endpoints, whose spectrum has the same properties as in the prolate spheriodal coordinate system case.

Note that if we define functions $F(\mu)$ and $G(\eta)$ with $\mu=\sinh{u}$ and $\eta=\sin{v}$ via $f(u)=F(\mu)$ and $g(v)=G(\eta)$, we can rewrite the system as
\begin{gather}
  F''
  +
  \frac{A^2\mu^2-c}{1+\mu^2}\,F
  =
  0,
  \qquad
  G''
  +
  \frac{A^2\eta^2+c}{1-\eta^2}\,G
  =
  0,
  \label{eq:FG_case2}
\end{gather}
to be solved for $\mu>0$ and $-1<\eta<1$, where the boundary conditions become $G(-1)=G(1)=0$, and $F(0)=0$ when $G$ is odd, and $F'(0)=0$ when $G$ is even since the symmetry of $G$ is inherited from $g$.

In this coordinate system, again we have not found any explicit solutions analytically and instead approximated them numerically in the way described in the previous section. The ``missing'' initial condition for $F$ has been supplied by $F'(0)=1$ when $F(0)=0$ and by $F(0)=1$ when $F'(0)=0$. Contour plots for various values of the parameters are shown in Figs.~\ref{fig:case2_A2_1}--\ref{fig:case2_A2_100}. In all of them $\alpha=0.01$ and $A^2$ ranges from $1$ to $100$.
\begin{figure}
  \includegraphics[width=0.32\textwidth]{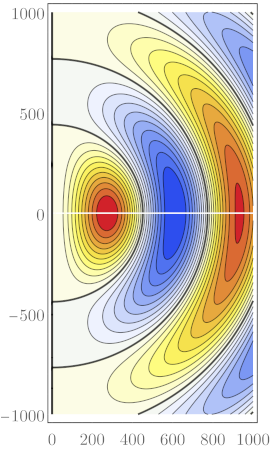}
  \hfill
  \includegraphics[width=0.32\textwidth]{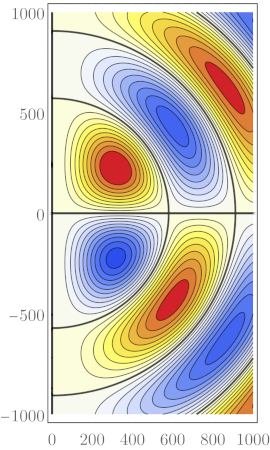}
  \hfill
  \includegraphics[width=0.32\textwidth]{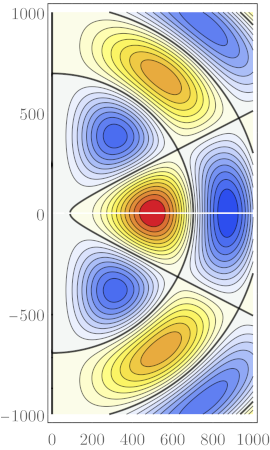}
  \vskip\baselineskip
  \includegraphics[width=0.32\textwidth]{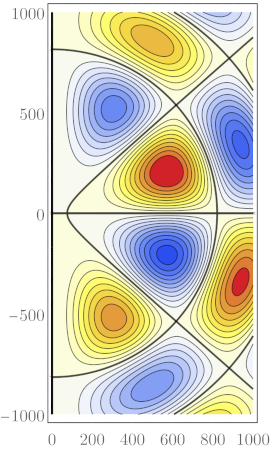}
  \hfill
  \includegraphics[width=0.32\textwidth]{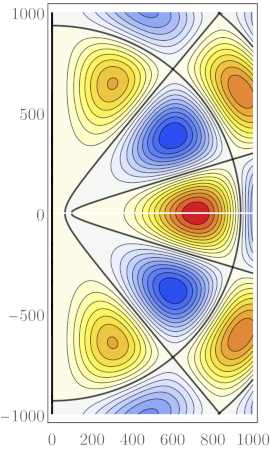}
  \hfill
  \includegraphics[width=0.32\textwidth]{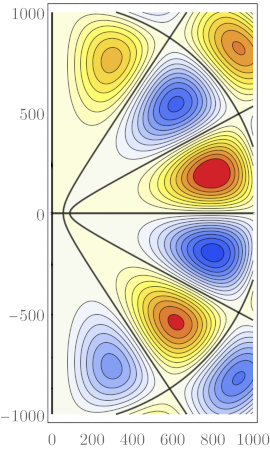}
  \caption{Contour plots of the stream functions $\psi$ obtained using oblate spheroidal coordinates. Results for the first six eigenmodes with $A^2=1$ are shown left to right and top to bottom.}
  \label{fig:case2_A2_1}
\end{figure}

\begin{figure}
  \includegraphics[width=0.32\textwidth]{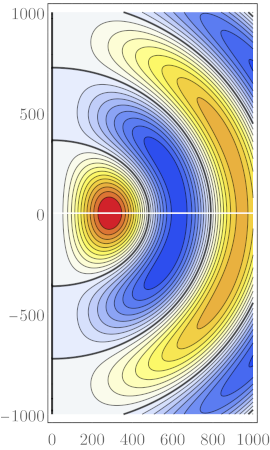}
  \hfill
  \includegraphics[width=0.32\textwidth]{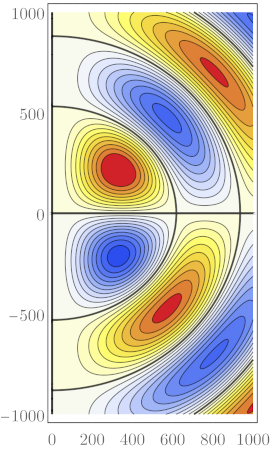}
  \hfill
  \includegraphics[width=0.32\textwidth]{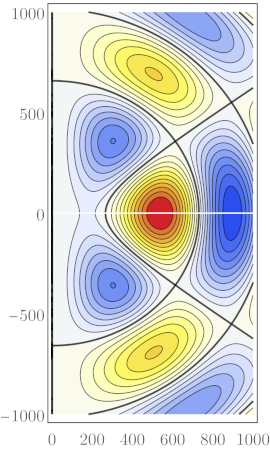}
  \vskip\baselineskip
  \includegraphics[width=0.32\textwidth]{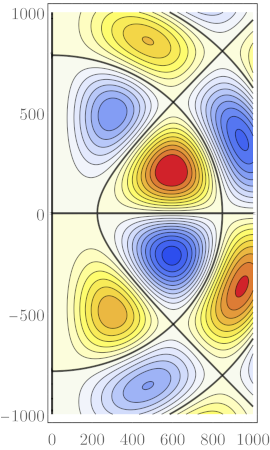}
  \hfill
  \includegraphics[width=0.32\textwidth]{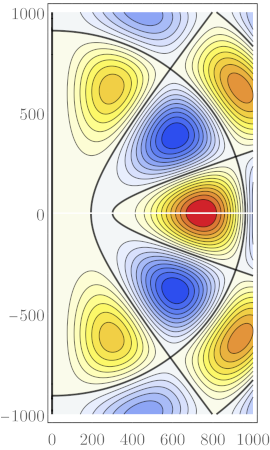}
  \hfill
  \includegraphics[width=0.32\textwidth]{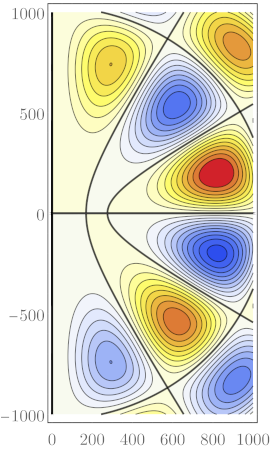}
  \caption{Contour plots of the stream functions $\psi$ obtained using oblate spheroidal coordinates. Results for the first six eigenmodes with $A^2=10$ are shown left to right and top to bottom.}
  \label{fig:case2_A2_10}
\end{figure}

\begin{figure}
  \includegraphics[width=0.32\textwidth]{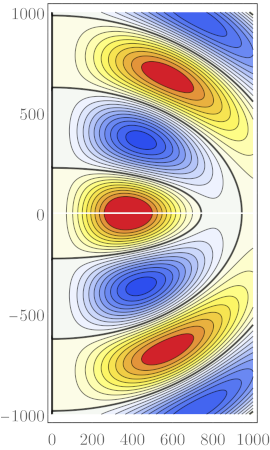}
  \hfill
  \includegraphics[width=0.32\textwidth]{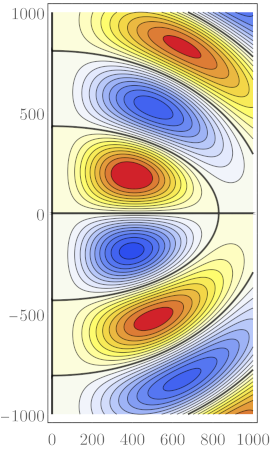}
  \hfill
  \includegraphics[width=0.32\textwidth]{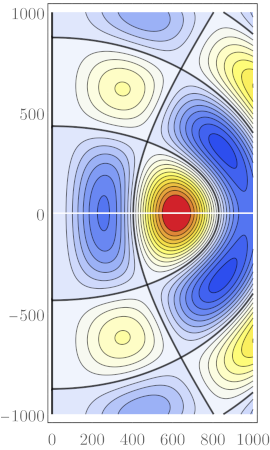}
  \vskip\baselineskip
  \includegraphics[width=0.32\textwidth]{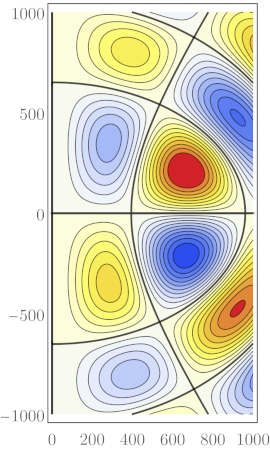}
  \hfill
  \includegraphics[width=0.32\textwidth]{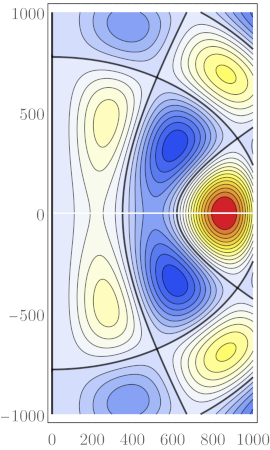}
  \hfill
  \includegraphics[width=0.32\textwidth]{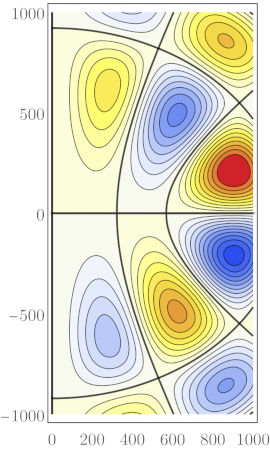}
  \caption{Contour plots of the stream functions $\psi$ obtained using oblate spheroidal coordinates. Results for the first six eigenmodes with $A^2=50$ are shown left to right and top to bottom.}
  \label{fig:case2_A2_50}
\end{figure}

\begin{figure}
  \includegraphics[width=0.32\textwidth]{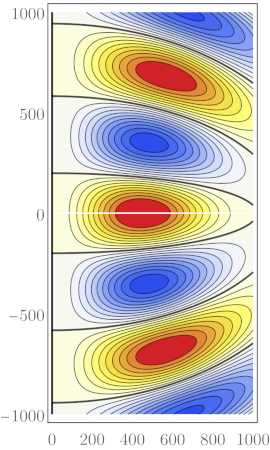}
  \hfill
  \includegraphics[width=0.32\textwidth]{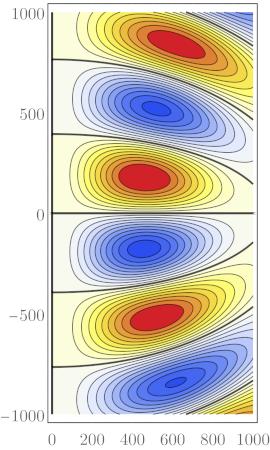}
  \hfill
  \includegraphics[width=0.32\textwidth]{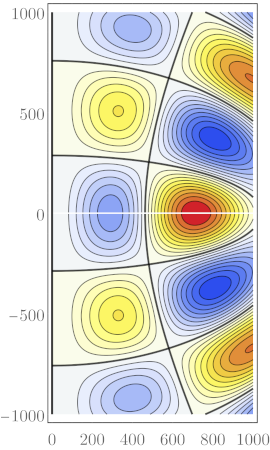}
  \vskip\baselineskip
  \includegraphics[width=0.32\textwidth]{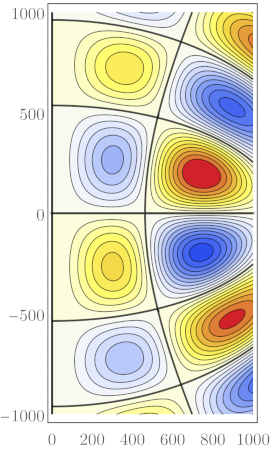}
  \hfill
  \includegraphics[width=0.32\textwidth]{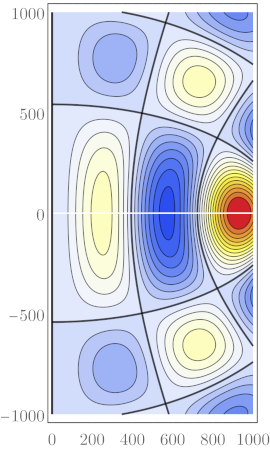}
  \hfill
  \includegraphics[width=0.32\textwidth]{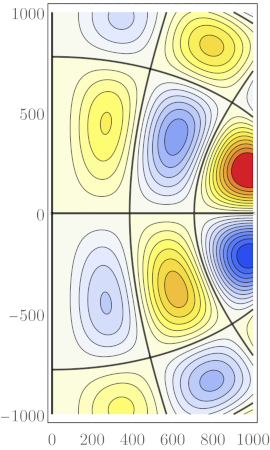}
  \caption{Contour plots of the stream functions $\psi$ obtained using oblate spheroidal coordinates. Results for the first six eigenmodes with $A^2=100$ are shown left to right and top to bottom.}
  \label{fig:case2_A2_100}
\end{figure}

We again note that for large $\mu$ the solution for $F$ in \eqref{eq:FG_case2} resembles that of $F''+A^2F=0$ and therefore exhibits near periodicity in $\mu$. Consequently, the contour plots of the corresponding stream function $\psi$ consist of repeating blocks along some of the curves shown in the right panel of Fig.~\ref{fig:ell-par-cyl} as indicated by the contour plots in Figs.~\ref{fig:case2_A2_1}--\ref{fig:case2_A2_100}.

\section{Vortex Breakdown}
\label{sec:vortex_breakdown}
We noted in Section~\ref{sec:coordinates} that the choice of the coordinate systems used in this paper was made with the intention of modeling tornado-like flows. Specifically, focusing on the corner flow near the origin, we can now address the question whether flows similar to those shown in Fig.~\ref{fig:vortex_breakdown} are possible with Beltrami flows.

Note that the flows that correspond to the even eigenmodes for the spherical and both cases of the spheroidal coordinates have the $r$-axis (or, more accurately the $z=0$ plane) as a stream surface, and therefore the part of the flow where $z>0$ can be taken to model a flow above the (horizontal) ground. We also note that similar flows can be easily created from the odd eigenmodes by applying the third on the list of transformations discussed at the end of Section ~\ref{sec:Bragg} with any $\zeta\in\Real$, similar to what we showed above in Fig.~\ref{fig:psi_paraboloidal} in the context of paraboloidal coordinates.

In Fig.~\ref{fig:vortex_breakdown_flows} we show, from left to right, stream functions that correspond to the fourth eigenmodes of the prolate spheroidal coordinates (left two panels), the spherical coordinates (middle panel), and oblate spheroidal coordinates (right two panels). These can be thought of as snapshots of a continuous transformation in which $a$ in the prolate spheroidal coordinates decreases to $0$, the coordinates becoming spherical coordinates, and then $a$ in the oblate spheroidal coordinates increases from $0$. One can easily visualize this transformation by looking at Fig.~\ref{fig:ell-par-cyl}.
\begin{figure}
  \includegraphics[width=0.19\textwidth]{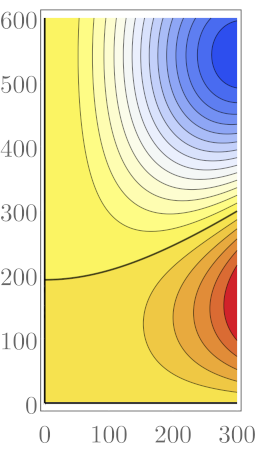}
  \hfill
  \includegraphics[width=0.19\textwidth]{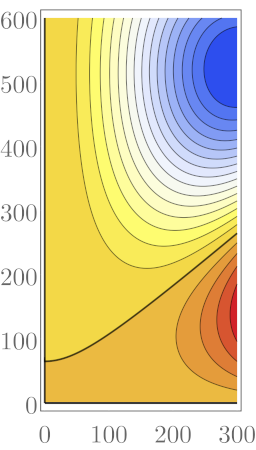}
  \hfill
  \includegraphics[width=0.19\textwidth]{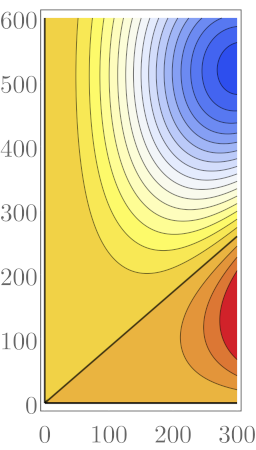}
  \hfill
  \includegraphics[width=0.19\textwidth]{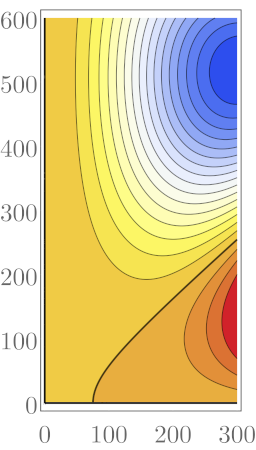}
  \hfill
  \includegraphics[width=0.19\textwidth]{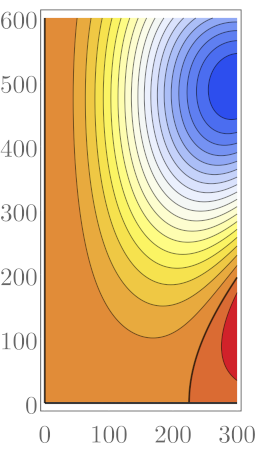}
  \caption{Illustrations of flows that have the characteristics of a vortex breakdown. The middle panel corresponds to spherical coordinates, the left two panels to prolate spheroidal coordinates (with $A^2=10$ and $A^2=1$, respectively), and the right two panels to oblate spheroidal coordinates (with $A^2=1$ and $A^2=10$, respectively). All panels correspond to $\alpha=0.01$ and to the fourth eigenmode.}
  \label{fig:vortex_breakdown_flows}
\end{figure}
One can interpret the left two panels in Fig.~\ref{fig:vortex_breakdown_flows} as a two-cell vortex with a horizontal inflow and a vertical updraft near the corner with a central downdraft near the $z$-axis. The middle panel, the flow in spherical coordinates, corresponds to the flow in which the stagnation point reaches the ground. Finally, the last two panels correspond to the flow in which the central downdraft reaches the ground and results in a horizontal outflow near the ground. Therefore, the progression shown in Fig.~\ref{fig:vortex_breakdown_flows} can be viewed as a quasi-static model for the transition between the flows shown in the middle two panels of Fig~\ref{fig:vortex_breakdown}.

\section{Conclusions}
In fluid dynamics, Beltrami flows have been, among other purposes, used for software validation and hypothesized to occur in tornadic flows. Consequently, it would be beneficial to have a rich catalog of such flows. In this paper we have attempted to construct such flows both analytically and numerically by focusing on Trkalian flows in several orthogonal coordinate systems. Motivated by tornado-like flows, we focused on incompressible, steady, axisymmetric flows which allowed us to use a stream function formulation. After some simplifying assumptions on the flows, we were able to construct solutions to the linear Bragg--Hawthorne equation~\eqref{eq:eqn_psi} in several suitable coordinate systems by reducing the problem to a system of two ordinary differential equations. The obtained solutions have been visualized using contour plots of the stream functions in the $rz$-plane and some of these solutions have been compared to idealized flows thought to occur in two-cell flows with a vortex breakdown. Additionally, we proposed ways to generate infinitely many new stream functions that can be constructed from the existing ones by continuously varying a scalar parameter.

The richness of the solution set obtained in this paper with a constant abnormality $\alpha$ indicates that many other solutions can be found by exploring other three-dimensional coordinate systems and by allowing $\alpha$ to vary in space.

\bibliography{tornado-master}

\begin{thebibliography}{10}

\bibitem{ahmadproctor11}
N.~N. Ahmad and F.~H. Proctor.
\newblock Simulation of benchmark cases with the {T}erminal {A}rea {S}imulation
  {S}ystem ({TASS}).
\newblock Technical report, NASA Langley Research Center, Hampton, Virginia,
  23681, 2011.

\bibitem{batchelor}
G.~K. Batchelor.
\newblock {\em An Introduction to Fluid Dynamics}.
\newblock Cambridge University Press, 2000.

\bibitem{constantinmajda88}
P.~Constantin and A.~Majda.
\newblock The {B}eltrami spectrum for incompressible fluid flows.
\newblock {\em Commun. Math. Phys.}, 115(3):435--456, 1988.

\bibitem{daviesjonesrichardson02}
R.~Davies-Jones and Y.~P. Richardson.
\newblock An exact anelastic {B}eltrami-flow solution for use in model
  validation.
\newblock In {\em Preprints, 19th {C}onf. on {W}eather {A}nalysis and
  {F}orecasting/15th {C}onf. on {N}umerical {W}eather {P}rediction, San
  Antonio, TX}, pages 43--46. Amer. Meteor. Soc., 2002.

\bibitem{daviesjones08}
R.~P. Davies-Jones.
\newblock Can a descending rain curtain in a supercell instigate tornadogenesis
  barotropically?
\newblock {\em J. Atmos. Sci.}, 65:2469--2497, 2008.

\bibitem{keller95}
J.~J. Keller.
\newblock On the interpretation of vortex breakdown.
\newblock {\em Phys. Fluids}, 7(7):1695--1702, 1995.

\bibitem{lakhtakia94}
A.~Lakhtakia.
\newblock Viktor {T}rkal, {B}eltrami fields, and {T}rkalian flows.
\newblock {\em Czechoslovak Journal of Physics}, 44(2):89--96, 1994.

\bibitem{lilly83}
D.~K. Lilly.
\newblock Dynamics of rotating thunderstorms.
\newblock In D.~K. Lilly and E.~T.~Gal-Chen, editors, {\em Mesoscale
  Meteorology--Theories, Observations, and Models}, volume 114, pages 531--544,
  D. Reidel, Dordrecht, 1983.

\bibitem{lilly86b}
D.~K. Lilly.
\newblock The structure, energetics and propagation of rotating convective
  storms. {P}art {II}: {H}elicity and storm stabilization.
\newblock {\em J. Atmos. Sci.}, 43:126--140, 1986.

\bibitem{majdabertozzi01}
A.~J. Majda and A.~Bertozzi.
\newblock {\em Vorticity and Incompressible Flows}.
\newblock Cambridge Texts in Applied Mathematics. Cambridge University Press, 1
  edition, 2001.

\bibitem{marsh96}
G.~E. Marsh.
\newblock {\em Force-Free Magnetic Fields: Solutions, Topology and
  Applications}.
\newblock World Scientific Publishing Co., 1996.

\bibitem{moffatt69}
H.~K. Moffatt.
\newblock The degree of knottedness of tangled vortex lines.
\newblock {\em J. Fluid Mech.}, 35(1):117--129, 1969.

\bibitem{nodaniino10}
A.~T. Noda and H.~Niino.
\newblock A numerical investigation of a supercell tornado: {G}enesis and
  vorticity budget.
\newblock {\em J. Meteor. Soc. Japan}, 88(2):135--159, 2010.

\bibitem{NIST}
F.~W.~J. Olver, D.~W. Lozier, R.~F. Boisvert, and C.~W. Clark, editors.
\newblock {\em NIST Handbook of Mathematical Functions}.
\newblock Cambridge University Press, Cambridge, 2010.

\bibitem{sasaki14}
Y.~K. Sasaki.
\newblock Entropic balance theory and variational field {L}agrangian formalism:
  {T}ornadogenesis.
\newblock {\em J. Atmos. Sci.}, 71(6):2104--2113, 2014.

\bibitem{shapiro93}
A.~Shapiro.
\newblock The use of an exact solution of the {N}avier--{S}tokes equations in a
  validation test of a three-dimensional nonhydrostatic numerical test.
\newblock {\em Mon. Wea. Rev.}, 121(8):2420--2425, 1993.

\bibitem{wrf}
W.~C. Skamarock, J.~B. Klemp, J.~Dudhia, D.~O. Gill, Z.~Liu, J.~Berner,
  W.~Wang, J.~G. Powers, M.~G. Duda, D.~M. Barker, and X.~Y. Huang.
\newblock A description of the advanced research {WRF} {V}ersion 4.
\newblock Technical report, NCAR, 2019.
\newblock Tech. Note NCAR/TN-556+STR.

\bibitem{stakgold11}
I.~Stakgold and M.~Holst.
\newblock {\em Green's Functions and Boundary Value Problems}.
\newblock John Wiley \& Sons, Inc., 2011.

\bibitem{sun17}
C.~Sun.
\newblock Explicit equations to transform from {C}artesian to elliptic
  coordinates.
\newblock {\em Mathematical Modelling and Applications}, 2(4):43--46, 2017.

\bibitem{trkal94}
V.~Trkal.
\newblock A note on the hydrodynamics of viscous fluids.
\newblock {\em Czechoslovak Journal of Physics}, 44(2):97--106, 1994.

\bibitem{weismanrotunno00}
M.~L. Weisman and R.~Rotunno.
\newblock The use of vertical wind shear versus helicity in interpreting
  supercell dynamics.
\newblock {\em J. Atmos. Sci.}, 57(9):1452--1472, 2000.

\end{thebibliography}
\bibliographystyle{abbrv}

\end{document}